\documentclass[pdflatex,sn-mathphys-num,icol]{sn-jnl}
\usepackage{fullpage}

\usepackage[T1]{fontenc}
\usepackage{amsmath,amssymb,amsthm}
\usepackage{mathtools}
\usepackage{graphicx}
\usepackage{booktabs}
\usepackage{xcolor}
\usepackage{comment}

\graphicspath{{BF-plots/}}
\theoremstyle{thmstyleone}
\newtheorem{theorem}{Theorem}[section]
\newtheorem{lemma}[theorem]{Lemma}
\newtheorem{corollary}[theorem]{Corollary}
\newtheorem{proposition}[theorem]{Proposition}

\theoremstyle{thmstyletwo}

\theoremstyle{thmstylethree}
\newtheorem{remark}[theorem]{Remark}

\begin{document}

\title[Passive two-plateau relaxation from Tricomi confluent hypergeometric kernels]{Passive two-plateau relaxation from Tricomi confluent hypergeometric kernels
\footnote{Published in: {\bf Nonlinear Dynamics, 114, 901 (2026). DOI: \href{https://doi.org/10.1007/s11071-026-12798-w}{10.1007/s11071-026-12798-w}.}}
}

\author[1,2]{\fnm{Marc} \sur{Tudela-Pi}}

\author[3]{\fnm{Ivano} \sur{Colombaro}}

\affil[1]{\orgdiv{Instituto de Microelectronica de Barcelona (IMB-CNM)}, \orgname{Consejo Superior de Investigaciones Cientificas (CSIC)}, \orgaddress{\city{Bellaterra}, \country{Spain}}}

\affil[2]{\orgname{CIBER-BBN, Instituto de Salud Carlos III}, \orgaddress{\city{Madrid}, \country{Spain}}}

\affil[3]{\orgdiv{Faculty of Engineering}, \orgname{Free University of Bozen-Bolzano}, \orgaddress{\street{via Bruno Buozzi 1}, \city{Bolzano}, \postcode{39100}, \country{Italy}}}

\abstract{%
Anomalous relaxation with memory spectra arises in disordered solids, soft matter, biological tissues and electrochemical interfaces. Fractional-order models capture broad power-law behaviour efficiently, but they can obscure spectral structure and are not always convenient for passive realisation or finite-dimensional simulation. We introduce a non-fractional passive framework based on the Tricomi confluent hypergeometric function, combined with a bounded Möbius normalisation that enforces prescribed low-frequency and high-frequency plateaux while preserving a broad dispersive transition. The resulting family contains the Debye and Cole–Cole responses as exact subcases, while extending them to asymmetric two-plateau dispersive laws with independently tunable low- and high-frequency exponents. For an admissible parameter range, we prove that the bounded block admits a Stieltjes representation with nonnegative spectral density, implying complete monotonicity, passivity, causality and compatibility with standard circuit, and state-space descriptions.
Building on this structure, we derive a passive Gauss–Stieltjes discretisation leading to Foster-type rational approximations and first-order state-space realisations with positive poles and residues. Numerical experiments show convergence of these finite-dimensional approximations across moderate-memory and long-tail regimes, enabling passive reduced-order representations of broad-memory responses. The framework is then validated on broadband dielectric data and battery electrochemical impedance spectra. In tissues, multi-block Tricomi mixtures improve complex-domain fitting accuracy relative to classical Cole–Cole baselines while preserving interpretable modal structure. In batteries, the same construction captures ageing-dependent spectral evolution and reveals a redistribution of dissipative dynamics toward slower characteristic time scales. Overall, the framework provides a constructive alternative to fractional-order models when passive realisability, spectral structure, and finite-dimensional implementation are required.}

\keywords{anomalous relaxation, Tricomi confluent hypergeometric function, complete monotonicity, passive systems, Stieltjes representation, state-space realisation}

\maketitle

\section{Introduction}

Linear response measurements in complex media frequently reveal non-exponential relaxation, in settings ranging from dielectric spectroscopy and polymer systems to bioimpedance and electrochemical impedance spectroscopy. Rather than a single characteristic decay time, experimental data point to broad and continuous distributions of relaxation times, leading to stretched transients in the time domain and frequency responses with flattened features and extended high-frequency tails. The classical single-time constant model, based on a first-order relaxation process, predicts an exponential decay that is adequate for dilute or weakly interacting systems, but it cannot account for the dispersive behaviour observed in disordered, heterogeneous, or strongly correlated materials. In such cases, relaxation is more naturally interpreted as the superposition of multiple microscopic processes with distinct characteristic scales, or equivalently as the action of a non-trivial memory kernel within a linear time-invariant framework. This phenomenology, commonly referred to as anomalous relaxation, motivates the search for modelling approaches that extend beyond the single-time constant paradigm while preserving linearity and passivity as required by physical realisability.

Fractional-order models have become a standard framework to describe anomalous relaxation beyond single-time-constant behaviour~\cite{Mainardi2010book}. By replacing classical first-order laws with derivatives of non-integer order, formulations such as fractional Cole--Cole~\cite{cole1941} or Havriliak--Negami~\cite{havriliak1967} elements reproduce stretched decays and power-law tails with a small number of parameters, and they are supported by a well-developed theory of completely monotone kernels and passive operators under suitable conditions~\cite{capelasoliveira2011MLrelaxation}. From a phenomenological standpoint, this approach provides an efficient and flexible description of broad distributions of relaxation times. At the same time, the fractional operators involved are intrinsically nonlocal in time, so that the response at a given instant is expressed through memory integrals over the past evolution of the system~\cite{tarasov2018NoNonlocality}. This nonlocality can make long-time time-domain simulations and integration into standard circuit or state-space solvers more demanding, and the interplay between fractional parameters and effective relaxation modes can lead to significant parameter degeneracy in practical identification.

An alternative viewpoint formulates anomalous relaxation directly through memory kernels or, equivalently, through distributions of relaxation times. Such kernel-based descriptions offer a natural connection with linear system theory, since passivity and causality can be enforced at the level of the kernel and the response can be interpreted as the superposition of elementary relaxation processes. For impedance applications, one would like these kernels to generate broad distributions of relaxation times and power-law-like tails while preserving a clear two-plateau behaviour with well-defined low-frequency and high-frequency limits that can be directly linked to measurable resistive contributions. Several families of functions have been explored to construct such kernels, including Mittag--Leffler~\cite{mainardi2015completemonotonicity} and Prabhakar~\cite{giusti2018prabhakar, giusti2020practicalguide} forms as well as confluent hypergeometric functions. Within this broader landscape, the Tricomi confluent hypergeometric function $U(a,b,z)$~\cite{tricomi1947funzioni} has been noted to produce non-exponential decays and power-law-type responses, suggesting it as a promising building block for dispersive models.
However, existing uses of these functions have mostly focused on analytical characterizations of relaxation and on classical properties of confluent hypergeometric functions~\cite{Slater1960confluent, Buchholz1969confluenthypergeometric, Erdelyi1953HTF-V1, Erdelyi1953HTF-V2}, or on their role as analytic solutions to differential equations in physics~\cite{mathews2022KummerGuide}, rather than on defining compact passive elements with an explicit memory kernel, an explicit spectral structure, and a representation compatible with standard circuit and state-space formulations. These features are important because they allow the resulting models to remain physically consistent and compatible with standard tools for circuit analysis and state-space modelling, while providing an explicit description of the underlying memory effects in a compact and interpretable form~\cite{Mainardi2010book, Widder1946Laplace}.

In this work we develop a non-fractional modelling framework in which anomalous relaxation is described by a memory kernel built from the Tricomi confluent hypergeometric function $U$ and normalised to match prescribed low-frequency and high-frequency limits. The resulting response laws generate broad relaxation spectra with one or two power-law regimes, contain Debye and Cole--Cole as exact subfamilies, and recover constant-phase-like asymptotics while remaining passive and completely monotone in suitable parameter ranges. Exploiting the special-function structure of $U$, we derive passive spectral representations, constructive rational approximations, and impedance/state-space realisations, so that these Tricomi-based kernels can be used as practical and interpretable building blocks for linear systems with long memory. The framework is then validated on broadband dielectric data from biological tissues and on electrochemical impedance spectra from lithium-ion batteries, illustrating its applicability across distinct passive dispersive systems.

These considerations motivate the present framework, which is designed to address three practical limitations often encountered in classical fractional-order models.
First, the intrinsic nonlocality of fractional operators leads to memory integrals over the full past history, making long-time simulations
computationally demanding. Second, fractional transfer functions do not generally admit a direct constructive passive realization in terms of finite-dimensional circuits or state-space systems. Third, their implementation in time-domain solvers typically requires either approximation or the introduction of auxiliary dynamics.

The Tricomi-based construction provides an alternative route that addresses these aspects simultaneously. In fact, the model is expressed through an explicit kernel admitting a Stieltjes representation, which ensures passivity and allows a direct constructive reduction to finite-dimensional positive-real systems. At the same time, the resulting formulation avoids explicit long-memory convolution in time-domain implementations once a reduced representation is obtained, while retaining the ability to describe broad dispersive relaxation behaviours.

\section{A framework for anomalous relaxation}
\label{sec:tricomi-framework}

\subsection{Anomalous relaxation with two plateaus}
\label{subsec:problem-setting}

We consider a causal linear time-invariant (LTI) block characterised by a kernel $k(t)$, defined for $t\ge 0$, and acting through convolution. For an input $x(t)$ and an output $y(t)$, one has
\begin{equation}
y(t)=\int_0^t k(t-\xi)\,x(\xi)\,\mathrm{d}\xi,
\end{equation}
so that, in the Laplace domain,
\begin{equation}
X(s)=\mathcal{L}\{x(t)\}(s), \qquad Y(s)=H(s)\,X(s),
\end{equation}
with
\begin{equation}
H(s)=\mathcal{L}\{k(t)\}(s)
=\int_0^\infty e^{-st}\,k(t)\,\mathrm{d}t,
\label{eq:memory-kernel-short}
\end{equation}
where $H(s)$ denotes the transfer law of the block and $k(t)$ its impulse response~\cite{Oppenheim1997signalssystems,Widder1946Laplace}.
More generally, the impulse response of the block may include a direct term, so that one may write
\[
h(t)=D\,\delta(t)+k(t),
\]
with corresponding transfer law
\[
H(s)=D+\mathcal{L}\{k(t)\}(s).
\]
In the present section, the convolution form is used to describe the strictly proper memory part, whereas the bounded two-plateau model introduced later may also include a direct high-frequency contribution.

In many complex media, the relevant transfer laws exhibit two distinct asymptotic regimes, namely a low-frequency plateau and a high-frequency plateau. These are defined by
\begin{equation}\label{eq:plateaus}
H_0 := \lim_{s\to 0} H(s),
\qquad
H_\infty := \lim_{|s|\to\infty} H(s).
\end{equation}
The limit $H_0$ in~\eqref{eq:plateaus} represents the quasi-static response of the system~\cite{Astrom2008feedbacksystem}, whereas $H_\infty$ characterises its fast-response behaviour~\cite{Khalil2002nonlinearsystems}. Under standard regularity and stability assumptions, these frequency-domain limits are linked to the long-time and short-time structure of the associated time-domain dynamics through Laplace-transform asymptotics~\cite{Ogata2010moderncontrol}, including the initial and final value theorems and related Tauberian arguments~\cite{Mainardi2010book,Widder1946Laplace}. Thus, the two plateaux of $H(s)$ may be viewed as the frequency-domain counterpart of the two asymptotic regimes typically observed in anomalous relaxation.

In anomalous relaxation, the kernel $k(t)$ is generally non-exponential and reflects the contribution of multiple effective relaxation scales rather than a single characteristic time. Equivalently, the corresponding transfer law $H(s)$ cannot be reduced to a single first-order rational form, but instead encodes broad memory effects through the structure of the kernel in~\eqref{eq:memory-kernel-short}. In this setting, properties such as nonnegativity and complete monotonicity of $k(t)$ play a central role in ensuring physical admissibility, causality and compatibility with passive realisations.

Our goal is therefore to construct kernels whose Laplace transforms generate broad dispersive transitions between the two plateaux in~\eqref{eq:plateaus}, while preserving analyticity in the right half-plane and a structure compatible with circuit and state-space descriptions. In the remainder of the paper, this transfer-law viewpoint will be specialised to impedance modelling, where the same bounded dispersive block will be anchored between prescribed low-frequency and high-frequency resistive plateaux.

\subsection{The Tricomi confluent hypergeometric function}
\label{subsec:tricomi}

Our construction is based on the Tricomi confluent hypergeometric function, usually denoted by $U(a,b,z)$, introduced by F.~Tricomi in 1947~\cite{tricomi1947funzioni}. The Tricomi confluent hypergeometric function is one of the two classical linearly independent solutions of the confluent hypergeometric differential equation, also known in the literature as Kummer's equation~\cite{Abramowitz1965handbook}, written
\begin{equation}
  z\,\frac{\mathrm{d}^2 w}{\mathrm{d} z^2}
  + (b - z)\,\frac{\mathrm{d} w}{\mathrm{d} z}
  - a\,w(z) = 0,
  \label{eq:tricomi_ode}
\end{equation}
with real parameters $a$ and $b$.
For completeness, we recall that equation~\eqref{eq:tricomi_ode} admits two linearly independent solutions, commonly denoted by the Kummer function $M(a,b,z)$~\cite[\S~47]{Oldham2009atlasfunctions} and the aforementioned Tricomi function $U(a,b,z)$~\cite[\S~48]{Oldham2009atlasfunctions}. The function $M(a,b,z)$ is entire in $z$ and regular at the origin, satisfying $M(a,b,0)=1$, whereas $U(a,b,z)$ exhibits a singular behaviour as $z\to 0$ for $b>1$. The two branches are connected through the classical Kummer relation~\cite{Erdelyi1953HTF-V1}
\begin{align}\label{eq:kummer-relation}
U(a,b,z) = \frac{\Gamma(1-b)}{\Gamma(a-b+1)}\, M(a,b,z) 
+ \frac{\Gamma(b-1)}{\Gamma(a)}\, z^{1-b} M(a-b+1,2-b,z),
\end{align}
which identifies $U(a,b,z)$ as the solution decaying along the positive real axis.

This qualitative behaviour makes $U(a,b,z)$ a natural candidate for modelling relaxation phenomena governed by a positive argument. Moreover, for parameters $a>0$ and $b>1$, the Tricomi function admits the Laplace-type integral representation~\cite{tricomi1947funzioni}
\begin{equation}
  U(a,b,z)
  = \frac{1}{\Gamma(a)}
    \int_{0}^{\infty}
      e^{-z t}\, t^{a-1} (1+t)^{\,b-a-1}\,\mathrm{d}t,
  \label{eq:tricomi_integral}
\end{equation}
which expresses $U(a,b,z)$ in the region $z>0$ as the Laplace transform of the nonnegative kernel
\begin{equation}
  g_{a,b}(t)
  := \frac{1}{\Gamma(a)}\, t^{a-1} (1+t)^{\,b-a-1},
  \qquad t>0.
  \label{eq:gab_def}
\end{equation}
Once the argument is identified with a rescaled Laplace variable of the form $z = s\tau$, the representation~\eqref{eq:tricomi_integral} makes it possible to interpret $U(a,b,s\tau)$ directly in terms of a time-domain kernel constructed from $g_{a,b}(t)$.

We will exploit this structure in two ways. First, the Laplace form~\eqref{eq:tricomi_integral} provides a direct route to define relaxation laws and, in particular, impedance elements in the frequency domain by inserting $U(a,b,s\tau)$ into suitable circuit topologies~\cite{VanValkenburg1964networkanalysis}. Second, the fact that $U(a,b,z)$ is generated by an explicit integral kernel will allow us, in later sections, to derive time-domain memory functions and to analyse conditions for passivity and complete monotonicity. For the purposes of the present subsection, it is enough to fix $U(a,b,z)$ as our basic special function and to record~\eqref{eq:tricomi_integral}--\eqref{eq:gab_def} as the main tools from which our Tricomi-based model will be constructed.

Once a rescaling of the form $z=s\tau$ is introduced, the Tricomi function in~\eqref{eq:tricomi_integral} appears as the Laplace transform of the nonnegative kernel~\eqref{eq:gab_def}, namely
\begin{equation}\label{eq:U_laplace_kernel}
    U(a,b,s\tau)
    = \int_{0}^{\infty} e^{-st}\, g_{a,b,\tau}(t)\,\mathrm{d}t,
    \quad s>0,
\end{equation}
with
\begin{equation}\label{eq:k_ab_tau}
    g_{a,b,\tau}(t)
    := \frac{\tau^{-1}}{\Gamma(a)}
       \left(\frac{t}{\tau}\right)^{a-1}
       \left(1+\frac{t}{\tau}\right)^{\,b-a-1},
\end{equation}
defined for $t>0$ and obtained from~\eqref{eq:gab_def} through the change of variables $t\mapsto t/\tau$, i.e.
\begin{equation}\label{eq:k_ab_tau-g}
    g_{a,b,\tau}(t)
    := \tau^{-1}\, g_{a,b}(t/\tau) .
\end{equation}
In particular, $g_{a,b,\tau}(t)$ is nonnegative and locally integrable on $(0,\infty)$, as shown in the next subsection.

\subsection{Structural properties of Tricomi-based relaxation kernels}
\label{subsec:tricomi_properties}

The integral representation~\eqref{eq:tricomi_integral}--\eqref{eq:gab_def} links the Tricomi confluent hypergeometric function $U(a,b,z)$ to memory kernels through a Laplace transform~\cite{tricomi1947funzioni, Slater1960confluent, Olver2010NIST}. To use $U(a,b,s\tau)$ as the core of a relaxation model, we briefly summarize a few structural properties that will be used later, namely analyticity in the right half-plane, the existence of a nonnegative time-domain kernel, complete monotonicity and BIBO stability.
These properties are crucial because, in impedance modelling, analyticity, causality, complete monotonicity and BIBO stability are consistent with physically meaningful behaviour. In particular, BIBO stability ensures that bounded inputs produce bounded outputs, preventing unphysical divergences. These properties are also compatible with passive realisations and with nonnegative memory-kernel representations. In the bounded Tricomi-based model developed below, passivity is established rigorously through the Stieltjes representation and the positive-real character of the corresponding transfer law.

\begin{lemma}[Analyticity]
\label{lem:analyticity}
Let $a>0$, $b>1$ and $\tau>0$. Then, the mapping
\begin{equation}
  s \;\mapsto\; U(a,b,s\tau)
\end{equation}
is analytic in the open half-plane $\{ s \in \mathbb{C} : \Re \{s\} > 0 \}$. Consequently, any transfer function obtained by composing $U(a,b,s\tau)$ with a rational function $\Psi$ that is analytic on a neighbourhood of its range is also analytic in $\Re \{s\}>0$.
\end{lemma}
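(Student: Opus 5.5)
The plan is to prove analyticity directly from the Laplace-type representation \eqref{eq:U_laplace_kernel}, extended from real $s>0$ to complex $s$. For $\Re\{s\}>0$ we set
\[
F(s):=\int_0^\infty e^{-st}\,g_{a,b,\tau}(t)\,\mathrm{d}t ,
\]
where $g_{a,b,\tau}\ge 0$ is given by \eqref{eq:k_ab_tau}. The first step is to check that this integral converges absolutely on the half-plane. Since $|e^{-st}|=e^{-\Re\{s\}t}$, absolute convergence reduces to a real estimate. Near $t=0$ we have $g_{a,b,\tau}(t)\sim t^{a-1}/(\Gamma(a)\tau^a)$, which is integrable because $a>0$. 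For large $t$ we have $g_{a,b,\tau}(t)=O(t^{b-2})$, which is polynomial growth, so the factor $e^{-\Re\{s\}t}$ dominates it.

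The second step is holomorphy of $F$, via local uniform convergence. Fix a compact set $K\subset\{\Re\{s\}>0\}$ and let $\sigma_0:=\min_K \Re\{s\}>0$. Then $e^{-\sigma_0 t}g_{a,b,\tau}(t)$ is an integrable majorant, uniform in $s\in K$. There are two standard ways to conclude:
\begin{itemize}
\item apply the theorem on holomorphic dependence of parameter integrals, i.e.\ differentiate under the integral by dominated convergence, using the majorant $t\,e^{-\sigma_0 t}g_{a,b,\tau}(t)$ for the derivative;
\item write $F$ as the locally uniform limit of the entire functions $\int_0^N e^{-st}g_{a,b,\tau}(t)\,\mathrm{d}t$ (each entire by Morera and Fubini) and invoke Weierstrass' theorem.
\end{itemize}
Either way, $F$ is analytic on $\Re\{s\}>0$.

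The third step identifies $F$ with $s\mapsto U(a,b,s\tau)$. On the positive real axis the two agree by \eqref{eq:tricomi_integral}--\eqref{eq:U_laplace_kernel}. $U(a,b,\cdot)$ is analytic on $\mathbb{C}\setminus(-\infty,0]$ (principal branch), so $s\mapsto U(a,b,s\tau)$ is analytic on the right half-plane, and the identity theorem gives $F(s)=U(a,b,s\tau)$ there. Alternatively, one can simply take $F$ as the definition of $U(a,b,s\tau)$ for complex $s$, which is the analytic continuation fixed by \eqref{eq:tricomi_integral}.

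For the consequence, let $\Psi$ be rational and analytic on an open set $\Omega$ containing $U(a,b,s\tau)$ for all $\Re\{s\}>0$; in other words, no pole of $\Psi$ lies in that range. Then $\Psi\circ U(a,b,\cdot\,\tau)$ is a composition of holomorphic maps and is therefore analytic on $\Re\{s\}>0$.

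The only point requiring care is the majorant near $t=0$ when $0<a<1$, where the kernel is singular. The estimate above handles it, since $t^{a-1}$ is integrable for every $a>0$. Everything else is routine.
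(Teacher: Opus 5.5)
Your proposal is correct and follows the same route as the paper: analyticity comes from the Laplace-type representation of $U(a,b,s\tau)$ with a locally integrable, at most algebraically growing kernel, and the claim about $\Psi$ follows by composition of holomorphic maps. You make explicit what the paper leaves to ``standard results on the Laplace transform'', namely the majorant $e^{-\sigma_0 t}g_{a,b,\tau}(t)$ on compacta and the identification with the principal branch via the identity theorem. Note that if you assume $U(a,b,\cdot)$ is holomorphic on the cut plane, that assumption alone already gives the first claim, so your alternative of taking the integral as the definition for complex $s$ is the non-redundant version.
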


\begin{proof}
For $a>0$ and $b>1$, $U(a,b,z)$ admits the Laplace-type representation in the region $\Re \{z\}>0$
\begin{equation}
  U(a,b,z)
  = \frac{1}{\Gamma(a)}
    \int_{0}^{\infty}
      e^{-z t}\, t^{a-1} (1+t)^{\,b-a-1}\,\mathrm{d}t,
\end{equation}
see, e.g.,~\cite{Slater1960confluent, Olver2010NIST}. The kernel is locally integrable and has at most algebraic growth, so the integral defines the Laplace transform of an exponentially bounded function. Standard results on the Laplace transform~\cite{Widder1946Laplace} then yield analyticity of $z \mapsto U(a,b,z)$ in $\Re \{z\}>0$, and hence of $s \mapsto U(a,b,s\tau)$ in $\Re \{s\}>0$. The statement for $\Psi(U(a,b,s\tau))$ follows by composition of analytic maps.
\end{proof}

\begin{lemma}[Complete monotonicity and nonnegative kernel]
\label{lem:cm}
Let $a>0$ and $b>1$. Define, for $t>0$,
\begin{equation}
  g_{a,b}(t)
  := \frac{1}{\Gamma(a)}\, t^{a-1}(1+t)^{\,b-a-1}.
\end{equation}
Then $g_{a,b}(t) \ge 0$ for all $t>0$ and, for every real $z>0$,
\begin{equation}
  U(a,b,z) = \int_0^\infty e^{-z t}\, g_{a,b}(t)\,\mathrm{d}t.
\end{equation}
In particular, $z \mapsto U(a,b,z)$ is completely monotone on $(0,\infty)$, and for any $\tau>0$ the function $s \mapsto U(a,b,s\tau)$, with $s>0$, is the Laplace transform of a nonnegative time-domain kernel.
\end{lemma}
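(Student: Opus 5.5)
The plan has three parts: read off nonnegativity of the kernel, derive the Laplace representation of $U$, and apply Bernstein's theorem.

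\textbf{Nonnegativity.} For $t>0$, $a>0$ and $b>1$ we have $t^{a-1}>0$, $(1+t)^{b-a-1}>0$ and $\Gamma(a)>0$. Hence $g_{a,b}(t)>0$, which gives the first claim immediately.

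\textbf{Laplace representation.} I will not simply quote \eqref{eq:tricomi_integral}; I will justify that the integral converges and equals $U$.

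\emph{Convergence.} Fix $z>0$. Near $t=0$ the integrand behaves like $t^{a-1}$, which is integrable because $a>0$. As $t\to\infty$ it is bounded by $e^{-zt}\,t^{b-2}$, so the exponential factor guarantees convergence. Thus $F(z):=\int_0^\infty e^{-zt}g_{a,b}(t)\,\mathrm{d}t$ is finite for every $z>0$.

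\emph{Identification with $U$.} Differentiation under the integral sign is allowed by dominated convergence on $[z_0,\infty)$ for any $z_0>0$. Using it, I will verify that $F$ satisfies Kummer's equation \eqref{eq:tricomi_ode}. Integrating by parts against $\frac{\mathrm{d}}{\mathrm{d}t}\big[e^{-zt}t^{a}(1+t)^{b-a}\big]$ gives
\[
zF''+(b-z)F'-aF=-\frac{1}{\Gamma(a)}\int_0^\infty \frac{\mathrm{d}}{\mathrm{d}t}\big[e^{-zt}t^{a}(1+t)^{b-a}\big]\mathrm{d}t .
\]
The boundary terms vanish: at $t=0$ because $a>0$, and at $t=\infty$ because $z>0$.

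\emph{Selecting the solution.} Kummer's equation has the two-dimensional solution space spanned by $M$ and $U$. Watson's lemma applied to $F$ as $z\to\infty$ gives $F(z)\sim z^{-a}$. Among solutions, the one with algebraic decay $z^{-a}$ is exactly $U$, with normalization $U(a,b,z)\sim z^{-a}$, since $M$ grows like $e^{z}z^{a-b}$. This yields $F=U$.

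\textbf{Main obstacle.} The identification step is the only delicate part. If the intended level of rigour allows it, I would instead cite \eqref{eq:tricomi_integral} from the standard references~\cite{Olver2010NIST,Slater1960confluent}, exactly as in Lemma~\ref{lem:analyticity}. In that case this step reduces to checking the hypotheses $a>0$ and $\Re z>0$.

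\textbf{Complete monotonicity.} Differentiating under the integral, justified by the same domination, gives
\[
(-1)^n\frac{\mathrm{d}^n}{\mathrm{d}z^n}U(a,b,z)=\int_0^\infty t^n e^{-zt}g_{a,b}(t)\,\mathrm{d}t\ge 0 .
\]
This holds for all $n\ge 0$ and $z>0$, since the integrand is nonnegative. Equivalently, the claim follows from the easy direction of Bernstein's theorem~\cite{Widder1946Laplace}.

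\textbf{Rescaled statement.} For $\tau>0$, substituting $t\mapsto t/\tau$ gives \eqref{eq:U_laplace_kernel}. The resulting kernel $g_{a,b,\tau}(t)=\tau^{-1}g_{a,b}(t/\tau)$ is nonnegative by the first step, which completes the lemma.
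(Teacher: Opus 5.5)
Your proof is correct and follows the same route as the paper: nonnegativity of $g_{a,b}$, the Laplace representation of $U$, complete monotonicity via (the easy direction of) Bernstein's theorem, and the change of variables $t\mapsto t/\tau$ for the rescaled kernel. The only difference is that the paper simply quotes the integral representation from the literature, whereas you also sketch a self-contained verification of it. That verification checks out: the integrand is an exact derivative, so the Laplace integral solves Kummer's equation, and $U$ is singled out by its $z^{-a}$ decay against the exponential growth of $M$.
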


\begin{proof}
The representation above shows that $U(a,b,z)$ is the Laplace transform of the nonnegative kernel $g_{a,b}(t)$. By Bernstein's theorem on completely monotone functions~\cite{Schilling2010bernstein}, this implies that $z \mapsto U(a,b,z)$ is completely monotone on $(0,\infty)$. The rescaling $z=s\tau$ with $\tau>0$ corresponds to a change of variables in the Laplace transform and yields a nonnegative time-domain kernel for $s \mapsto U(a,b,s\tau)$.
\end{proof}

We recall that a generic causal linear time-invariant system in the form~\eqref{eq:memory-kernel-short} with impulse response $k \in L^1_{\mathrm{loc}}([0,\infty))$ is said to be BIBO (bounded-input, bounded-output) stable if every bounded input produces a bounded output in the usual sense. A sufficient condition for BIBO stability is that the impulse response be integrable, $k \in L^1(0,\infty)$~\cite{Widder1946Laplace, Oppenheim1997signalssystems}.

\begin{corollary}[BIBO stability]
\label{cor:bibo}
Let $H(s)$ be a Tricomi-based relaxation law of the form
\begin{equation}
  H(s) = \int_0^\infty e^{-s t}\,g_{a,b,\tau}(t)\,\mathrm{d}t,
  \qquad \Re \{s\}>0,
\end{equation}
where $g_{a,b,\tau}(t) \ge 0$ for all $t>0$. Suppose that the low-frequency limit
\begin{equation}
  H_0 := \lim_{s \to 0^+} H(s)
\end{equation}
exists and is finite. Then $g_{a,b,\tau} \in L^1(0,\infty)$ and
\begin{equation}
  \int_0^\infty g_{a,b,\tau}(t)\,\mathrm{d}t = H_0,
\end{equation}
so the corresponding causal LTI system is BIBO stable.
\end{corollary}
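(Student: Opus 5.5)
The plan is to obtain integrability of $g_{a,b,\tau}$ from the finiteness of $H_0$ by monotone convergence, and then use the standard $L^1$ criterion for BIBO stability recalled just before the statement.

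\emph{Integrability and the value of the integral.} For real $s>0$, the integrand $e^{-st}g_{a,b,\tau}(t)$ is nonnegative, because $g_{a,b,\tau}\ge 0$ by Lemma~\ref{lem:cm} and the rescaling~\eqref{eq:k_ab_tau-g}. Take any sequence $s_n\downarrow 0$. The functions $e^{-s_n t}g_{a,b,\tau}(t)$ then increase pointwise to $g_{a,b,\tau}(t)$ on $(0,\infty)$. The monotone convergence theorem gives
\[
\int_0^\infty g_{a,b,\tau}(t)\,\mathrm{d}t=\lim_{n\to\infty}\int_0^\infty e^{-s_n t}g_{a,b,\tau}(t)\,\mathrm{d}t=\lim_{n\to\infty}H(s_n)=H_0 .
\]
The integral is taken in $[0,\infty]$, and the identity holds regardless of whether it is finite. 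Since $H_0$ is assumed finite, $g_{a,b,\tau}\in L^1(0,\infty)$ and its integral equals $H_0$. Measurability and local integrability are automatic, since $g_{a,b,\tau}$ is continuous on $(0,\infty)$ with an integrable singularity $t^{a-1}$, $a>0$, at the origin.

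\emph{BIBO stability.} For a bounded input $|x|\le M$, the output satisfies
\[
|y(t)|\le\int_0^t g_{a,b,\tau}(t-\xi)\,|x(\xi)|\,\mathrm{d}\xi\le M\int_0^\infty g_{a,b,\tau}=M H_0 .
\]
This is exactly the sufficient condition stated before the corollary.

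The main point requiring care is the first step. We cannot interchange the limit and the integral by dominated convergence, since no integrable dominating function is known a priori. Monotonicity in $s$ together with nonnegativity of the kernel is what makes the argument work, and it is the only place where the hypothesis $g_{a,b,\tau}\ge0$ is used. As a side remark, for the Tricomi kernel one can check directly that finiteness of $H_0$ amounts to $b<1$. Under the standing assumption $b>1$, the kernel decays like $t^{b-2}$, which is not integrable, so $H_0=\infty$. The corollary is therefore a conditional statement, aimed at the normalised bounded blocks built later. The proof above does not depend on this, because it uses only nonnegativity and the finiteness hypothesis.
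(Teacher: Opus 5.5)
Your proof is correct and follows essentially the same route as the paper: nonnegativity of the kernel and monotone convergence as $s\to0^+$ identify $\int_0^\infty g_{a,b,\tau}\,\mathrm{d}t$ with $H_0$, and the resulting integrability gives BIBO stability. Your side remark is also accurate. Since $g_{a,b,\tau}(t)\sim t^{b-2}/(\tau^{b-1}\Gamma(a))$ as $t\to\infty$, the hypothesis of a finite $H_0$ is never met by the bare Tricomi kernel when $b>1$; this is consistent with, and slightly sharper than, the paper's comment after the corollary that the low-frequency limit is ``generally divergent''.
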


\begin{proof}
For real $s>0$, one has
\(
 H(s) = \int_0^\infty e^{-s t}\,g_{a,b,\tau}(t)\,\mathrm{d}t
\)
with $g_{a,b,\tau}(t) \ge 0$. As $s \to 0^+$, the factors $e^{-s t}$ increase pointwise to $1$, so the monotone convergence theorem yields
\(
 \lim_{s\to 0^+} H(s) = \int_0^\infty g_{a,b,\tau}(t)\,\mathrm{d}t = H_0 < \infty.
\)
Thus $g_{a,b,\tau} \in L^1(0,\infty)$, which implies BIBO stability for the associated convolution system.
\end{proof}

For the bare Tricomi kernel $U(a,b,s\tau)$ with $b>1$, the low-frequency limit is generally divergent, so the corresponding response need not be BIBO-stable. This motivates the bounded Möbius normalisation introduced later in the paper. 

\subsection{Two algebraic regimes of the normalised mapping}
\label{subsec:two-power-laws}

To study the algebraic behaviour induced by the Tricomi function, we consider the normalised mapping
\begin{equation}\label{eq:WU_def}
\begin{aligned}
  W_U(s) &:= F_{a,b}(s\tau)=\Phi\!\big(U(a,b,s\tau)\big),
  \\
  \Phi(z) &:= \frac{z}{1+z},
\end{aligned}
\end{equation}
with parameters $a>0$, $b>1$ and a time scale $\tau>0$. The M\"obius map $\Phi$ interpolates between a unit low-frequency plateau, namely $\Phi(z)\to 1$ as $z\to\infty$, and a zero high-frequency plateau, $\Phi(z)\to 0$ as $z\to 0$.

Assuming $a>0$, $b>1$ on the positive real axis, the asymptotic behaviour of the Tricomi confluent hypergeometric function $U(a,b,z)$ satisfies the classical asymptotics~\cite{Slater1960confluent,Olver2010NIST}
\begin{gather}
  U(a,b,z) \overset{{z\to 0^+}}{\sim} \frac{\Gamma(b-1)}{\Gamma(a)} z^{\,1-b} \, , \label{eq:U_asympt0}\\
  U(a,b,z) \overset{{z\to +\infty}}{\sim} z^{-a} \,, \label{eq:U_asympt+infty}
\end{gather}
where the prefactor in~\eqref{eq:U_asympt0} is positive, i.e. $\Gamma(b-1)/\Gamma(a)>0$.

We can now identify the low-frequency regime by setting $z=s\tau$ and using~\eqref{eq:U_asympt0}, as $s\to 0^+$, so that $U(a,b,s\tau)\to +\infty$ and
\[
\Phi\!\big(U(a,b,s\tau)\big) \;=\; 1 - \frac{1}{U(a,b,s\tau)} + O\!\big(U^{-2}\big).
\]
Hence, we conclude that
\begin{equation}\label{eq:WU_low}
  W_U(s) \overset{{s\to 0^+}}{\sim} 1 - \frac{\Gamma(a)}{\Gamma(b-1)}(s\tau)^{b-1}.
\end{equation}

A similar argument applies to the asymptotic behaviour in the high-frequency regime.
Using $U(a,b,s\tau)\sim (s\tau)^{-a}$ as $|s|\to\infty$ with $\Re(s)>0$, and $\Phi\!\big(U(a,b,s\tau)\big)=U(a,b,s\tau)+O(U^2)$ as $U(a,b,s\tau)\to 0$, we obtain
\begin{equation}\label{eq:WU_high}
  W_U(s) \overset{{|s|\to+\infty}}{\sim} (s\tau)^{-a},
  \qquad
\ \Re(s)>0.
\end{equation}

The asymptotic expansions~\eqref{eq:WU_low} and~\eqref{eq:WU_high} show that the approach to the unit low-frequency plateau is governed by $b-1$, whereas the high-frequency decay is governed by $a$. These purely mathematical regimes will be used in Sect.~\ref{sec:phys} to build and analyze two-plateau linear elements without invoking fractional operators.

\section{A Tricomi-based impedance element with two resistive plateaux}
\label{sec:phys}

\subsection{Construction of a Tricomi-based impedance element}
\label{subsec:tricomi-impedance}

The normalised mapping $W_U(s)=F_{a,b}(s\tau)=\Phi\!\big(U(a,b,s\tau)\big)$ introduced in Sect.~\ref{subsec:two-power-laws} exhibits two algebraic regimes controlled independently by the exponents $a$ and $b-1$. The underlying Tricomi block $U(a,b,s\tau)$ is analytic in $\Re\{s\}>0$ and completely monotone for $s>0$, which makes $W_U$ a natural candidate for a bounded two-plateau response.

Before introducing the Tricomi-based construction, it is instructive to recall how the M\"obius map $\Phi(z)=z/(1+z)$ arises in the classical Debye circuit. In fact, let us consider a resistor $R_\infty$ in series with a parallel branch made of a resistor $R_\Delta$ and a capacitor $C$. The total impedance is
\begin{equation}
Z_{\mathrm{D}}(s)
= R_\infty + Z_\parallel(s),
\end{equation}
with
\begin{equation}
Z_\parallel(s)
= \left(\frac{1}{R_\Delta}+sC\right)^{-1},
\end{equation}
so that, setting $R_\Delta:=R_0-R_\infty$ and $\tau:=R_\Delta C$, we recover the classical Debye element~\cite{VanValkenburg1964networkanalysis}
\begin{equation}\label{eq:Zdebye}
Z_{\mathrm{D}}(s)
= R_\infty + \frac{R_0 - R_\infty}{1+s\tau}.
\end{equation}
This response admits the low-frequency plateau
\begin{equation}
\lim_{s\to 0} Z_{\mathrm{D}}(s)=R_0\,,
\end{equation}
and the high-frequency plateau
\begin{equation}
\lim_{|s|\to\infty} Z_{\mathrm{D}}(s)=R_\infty\,,
\end{equation}
and it is convenient to rewrite~\eqref{eq:Zdebye} using the dimensionless response in the Laplace domain
\begin{equation}
u(s):=\frac{1}{s\tau},
\end{equation}
for which
\begin{equation}
Z_{\mathrm{D}}(s)
= R_\infty + (R_0-R_\infty)\,\Phi\bigl(u(s)\bigr).
\label{eq:Z_Debye_F_form}
\end{equation}
Thus the rational mapping $\Phi$ is not arbitrary, but it is the M\"obius normalisation that naturally arises from the series--parallel reduction of a Debye branch expressed in terms of the scalar factor $u(s)=1/(s\tau)$~\cite{Oppenheim1997signalssystems,VanValkenburg1964networkanalysis}. This motivates the use of the same outer mapping $\Phi$ when replacing $u(s)$ by a more general response.

To embed this behaviour into a physically meaningful impedance element, we adopt the standard linear--fractional anchoring used in classical circuit reductions~\cite{Oppenheim1997signalssystems,VanValkenburg1964networkanalysis} to write
\begin{equation}\label{eq:ZU_def}
Z_U(s)
:=
R_\infty + (R_0 - R_\infty)\,W_U(s),
\end{equation}
with $R_0 > R_\infty > 0$. This preserves the prescribed low-frequency and high-frequency limits, respectively
\begin{gather}
\lim_{s\to 0} Z_U(s) = R_0, \label{eq:lim0ZU}\\
\lim_{|s|\to\infty} Z_U(s) = R_\infty, \label{eq:liminftyZU}
\end{gather}
independently of the parameters $a$, $b$, and $\tau$, and provides a bounded, causal, two-plateau response suitable for impedance modelling.

This construction generalizes the classical Debye element, recovered as the special case $a=1$ and $b=2$, for which $U(1,2,z)=z^{-1}$ and therefore
\begin{align}
W_U(s)&=\Phi\!\bigl((s\tau)^{-1}\bigr)=\frac{1}{1+s\tau}, \label{eq:comp-debye-W}\\
Z_U(s)&=R_\infty+\frac{R_0-R_\infty}{1+s\tau}. \label{eq:comp-debye-Z}
\end{align}
For any other choice of parameters in the admissible range $a>0$ and $b>1$, the internal Tricomi block $U(a,b,s\tau)$ no longer corresponds to a single exponential relaxation, and the resulting element displays a genuinely non-Debye transition between the two plateaux. This behaviour is consistent with other completely monotone non-exponential relaxation models characterised in circuit form~\cite{colombaro2026Besselbioimpedance}.

The analyticity and complete-monotonicity properties of the underlying Tricomi block make $W_U$ a natural candidate for a bounded two-plateau response.
The passivity of the bounded block are established in Sect.~\ref{sec:ss:stieltjes} for the admissible parameter range $a\in(0,1)$, $b\in(1,2)$, while the frequency-domain behaviour of $Z_U$ follows from the asymptotics of $W_U$ introduced in Sect.~\ref{subsec:two-power-laws} and is examined in detail in the next subsection.

\subsection{Frequency-domain characterisation}
\label{subsec:freq-characterisation}

The asymptotic regimes of the normalised mapping $W_U(s)$ previously established determine the qualitative structure of the frequency response of the Tricomi-based element $Z_U(s)$ defined in \eqref{eq:ZU_def}. Combining $Z_U(s)=R_\infty+(R_0-R_\infty)W_U(s)$ with the expansions~\eqref{eq:WU_low}--\eqref{eq:WU_high} yields, for real $s>0$,
\begin{align}
Z_U(s) & \overset{s \to 0^+}{\sim} R_0 - (R_0-R_\infty)\frac{\Gamma(a)}{\Gamma(b-1)}(s\tau)^{b-1},
 \label{eq:ZU_lowfreq}\\
Z_U(s) & \overset{|s|\to\infty}{\sim} R_\infty + (R_0-R_\infty)\,(s\tau)^{-a}. \label{eq:ZU_highfreq}
\end{align}
Thus, as already observed, the approach to the low-frequency plateau is governed by $b-1$, while the decay from the high-frequency plateau is governed independently by $a$.
This separation of roles allows the element to exhibit a wide range of algebraic transition behaviours between $R_0$ and $R_\infty$.

On the imaginary axis $s=\jmath\omega$, where $\omega$ denotes the angular frequency, the response traces a depressed arc in the Nyquist plane, associated with a single bounded dispersive relaxation block. As $\omega\to 0$, from~\eqref{eq:WU_low} we have
\begin{equation}
    W_U(\jmath\omega)\overset{\omega \to 0}{\sim} 1 - \frac{\Gamma(a)}{\Gamma(b-1)}\,(\jmath\omega\tau)^{\,b-1},
\end{equation}
so that the deviation $R_0-Z_U(\jmath\omega)$ scales as $\omega^{b-1}$ and has asymptotic phase $90(b-1)^\circ$.
Conversely, as $\omega\to\infty$, according to~\eqref{eq:WU_high} we get
\begin{equation}
W_U(\jmath\omega)\overset{\omega \to \infty}{\sim} (\jmath\omega\tau)^{-a},
\end{equation}
and the approach to the high-frequency plateau exhibits an algebraic decay with asymptotic phase $-90a$ degrees. These limits coincide with the behaviour of a fractional pseudo-capacitive element of order~$a$ at high frequency, while the exponent~$b-1$ controls the saturation towards the DC level.

\begin{proposition}[Non-Debye character of the Tricomi-based element]
\label{prop:non_debye}
Let $Z_U$ be the Tricomi-based impedance element defined in~\eqref{eq:ZU_def} with parameters $a>0$, $1<b\leq 2$, $R_0>R_\infty>0$ and $\tau>0$. Assume that $(a,b)\neq(1,2)$. Then $Z_U$ cannot be reduced to a single-time-constant Debye element of the form
\begin{equation}
  Z_{\mathrm{D}}(s)
  = R_\infty + \frac{R_0 - R_\infty}{1 + s\tau},
  \qquad \tau > 0,
  \label{eq:generic_Debye}
\end{equation}
and its time-domain relaxation is therefore non-exponential.
\end{proposition}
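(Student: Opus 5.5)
The plan is to argue by contradiction through the asymptotic regimes \eqref{eq:ZU_lowfreq}--\eqref{eq:ZU_highfreq}. Any Debye element \eqref{eq:generic_Debye}, with any time constant $\tau_D>0$, has exact expansions with integer exponents:
\[
Z_{\mathrm D}(s)=R_0-(R_0-R_\infty)\,s\tau_D+O(s^2) \quad (s\to0^+),
\]
\[
Z_{\mathrm D}(s)=R_\infty+(R_0-R_\infty)(s\tau_D)^{-1}+O(s^{-2}) \quad (s\to+\infty).
\]
So suppose $Z_U\equiv Z_{\mathrm D}$ on $(0,\infty)$ for some $\tau_D>0$. The plateaux $R_0,R_\infty$ already match by \eqref{eq:lim0ZU}--\eqref{eq:liminftyZU}, so only the rate of approach to them can distinguish the two elements. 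Identifying the rates will force $b-1=1$ and $a=1$, i.e.\ $(a,b)=(1,2)$, which is excluded.

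The key steps are as follows.

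First, I compare the low-frequency rates. Since $R_0>R_\infty$, the quantity $R_0-Z_U(s)$ is asymptotic to $C_0 s^{b-1}$ with $C_0=(R_0-R_\infty)\Gamma(a)\tau^{b-1}/\Gamma(b-1)>0$. For the Debye element the same quantity is asymptotic to $C_1 s$ with $C_1>0$. The ratio of two positive power laws tends to a finite nonzero limit only if the exponents coincide, which gives $b=2$.

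Second, I compare the high-frequency rates. Here $Z_U(s)-R_\infty\sim(R_0-R_\infty)(s\tau)^{-a}$, while $Z_{\mathrm D}(s)-R_\infty\sim(R_0-R_\infty)(s\tau_D)^{-1}$. The same reasoning forces $a=1$.

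Together these give $(a,b)=(1,2)$, a contradiction. Non-exponential time-domain relaxation then follows because the kernel of $Z_U-R_\infty$ is determined uniquely by its Laplace transform, by Laplace uniqueness \cite{Widder1946Laplace}. A single exponential $c\,e^{-t/\tau_D}$ would have transform $c\tau_D/(1+s\tau_D)$. Matching the plateau $R_0$ forces $c\tau_D=R_0-R_\infty$, which is exactly the excluded Debye form.

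The main obstacle is the case $b=2$ with $a\neq1$. There \eqref{eq:U_asympt0} must be handled carefully: $\Gamma(b-1)=\Gamma(1)=1$, so the leading term $\Gamma(a)^{-1}z^{-1}$ is fine. However, the next-order terms of $U(a,2,z)$ contain a logarithm, and I would check that they are $o(z^{-1})$ so that $W_U$ still behaves like $1-\Gamma(a)s\tau$. This does not matter for the argument, though. Once $b=2$, the low-frequency comparison gives no contradiction, and the high-frequency exponent $a\neq1$ alone decides the case, using only the large-$z$ asymptotic \eqref{eq:U_asympt+infty}, which holds for all $a>0$.

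Symmetrically, if $a=1$ but $b<2$, the low-frequency exponent $b-1<1$ decides the case. In that situation I will rely on \eqref{eq:U_asympt0}, which is valid for $1<b<2$. Since the hypothesis $b\le2$ keeps us within the range where \eqref{eq:U_asympt0} gives the leading behaviour, no further case split is needed.
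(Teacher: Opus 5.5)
Your proposal is correct and follows the same argument as the paper's proof. Both compare the exponents of the plateau corrections, $s^{b-1}$ against $s$ at low frequency and $s^{-a}$ against $s^{-1}$ at high frequency, conclude $(a,b)=(1,2)$, and then deduce non-exponential relaxation from the fact that a single exponential transforms to the Debye form; your additions (an arbitrary Debye time constant $\tau_D$, the explicit ratio argument, and the observation that the logarithmic correction at $b=2$ is irrelevant because the high-frequency exponent alone settles that case) only make the same argument more careful.
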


\begin{proof}
The Debye impedance~\eqref{eq:generic_Debye} satisfies the classical low-frequency and high-frequency expansions
\begin{align}
  Z_{\mathrm{D}}(s)
  &\overset{s\to 0}{=} R_0 - (R_0-R_\infty)\tau s + O(|s|^2)\, , \label{eq:Debye_lowfreq}\\
  Z_{\mathrm{D}}(s)
  &\overset{|s|\to\infty}{=} R_\infty + (R_0-R_\infty)\tau^{-1}s^{-1}
     + O(|s|^{-2}). \label{eq:Debye_highfreq}
\end{align}
In particular, the corrections around the plateaux scale as $|s|$ at low frequency and as $|s|^{-1}$ at high frequency.

On the other hand, for the Tricomi-based element $Z_U$, the asymptotics~\eqref{eq:ZU_lowfreq} and~\eqref{eq:ZU_highfreq} give corrections of order $|s|^{\beta}$ with $\beta=b-1$ as $s\to 0$, and of order $|s|^{-a}$ as $|s|\to\infty$. If $Z_U$ coincided with a Debye element~\eqref{eq:generic_Debye} for all $s$ with $\Re\{s\}>0$, these exponents would have to match, which forces $\beta=1$ and $a=1$. Since $\beta=b-1$, this implies $b=2$. Therefore, Debye behaviour can arise only in the special case $(a,b)=(1,2)$.

Finally, the Debye form~\eqref{eq:generic_Debye} corresponds to a single exponential relaxation in the time domain. Since, for $(a,b)\neq(1,2)$, the frequency response of $Z_U$ does not reduce to the Debye form, its associated time-domain response cannot be a single exponential either. As a consequence, the Tricomi-based element is genuinely non-Debye for all $(a,b)\neq(1,2)$.
\end{proof}

\begin{remark}
Throughout the previous discussion, the Tricomi kernel $U(a,b,z)$ is considered under the sufficient conditions $a>0$ and $b>1$, which ensure the existence of a nonnegative completely monotone time-domain kernel.
The additional restriction $1<b\le 2$ is introduced only from Proposition~\ref{prop:non_debye} onward, where the focus shifts from the general kernel to a bounded two-plateau relaxation element.
This range guarantees a physically meaningful low-frequency power-law regime, the existence of a finite DC plateau after normalisation, and it will later ensure a positive Stieltjes spectral density for the bounded mapping.
\end{remark}

To summarize, the Tricomi-based impedance $Z_U(s)$ provides a flexible two-plateau response in which the low- and high-frequency power laws are tunable and decoupled through the parameters $a$ and $b$. This contrasts with classical models in which a single exponent controls both tails and a detailed comparison with Debye and Cole--Cole elements is presented in the following Sect.~\ref{subsec:comparison}.

\subsection{Comparison with classical models}
\label{subsec:comparison}

The Tricomi-based element contains the Debye and Cole--Cole families as exact subcases. Outside those subfamilies it generates genuinely asymmetric two-plateau responses, with independently tunable low- and high-frequency exponents.

The Debye element is recovered for $a=1$ and $b=2$, as shown in~\eqref{eq:comp-debye-W}--\eqref{eq:comp-debye-Z}. In that case the response has the classical single-time-constant form, with a linear correction near the DC plateau and an $s^{-1}$ decay at high frequency.

A broader benchmark is provided by the Cole--Cole family~\cite{cole1941}, whose normalised impedance is
\begin{equation}
W_{\mathrm{CC}}(s)=\frac{1}{1+(s\tau_{\mathrm{CC}})^\alpha},
\qquad 0<\alpha\le 1.
\label{eq:WCC}
\end{equation}

\begin{proposition}[Exact embedding of the Cole--Cole family]
\label{prop:exact_cole_cole}
Let $a>0$ and set $b=a+1$. Then, for $\Re\{z\}>0$,
\begin{equation}
U(a,a+1,z)=z^{-a}.
\label{eq:U_aa1}
\end{equation}
Consequently,
\begin{equation}
W_U(s)=\frac{1}{1+(s\tau)^a},
\label{eq:WU_cole_cole}
\end{equation}
and the anchored impedance becomes
\begin{equation}
Z_U(s)=R_\infty+\frac{R_0-R_\infty}{1+(s\tau)^a}.
\label{eq:ZU_cole_cole}
\end{equation}
For $0<a\le 1$, this is precisely the classical Cole--Cole impedance with exponent $\alpha=a$ and characteristic time $\tau_{\mathrm{CC}}=\tau$. The Debye element is recovered as the endpoint $a=1$, $b=2$.
\end{proposition}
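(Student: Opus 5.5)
The plan is to prove the identity $U(a,a+1,z)=z^{-a}$ first for real $z>0$, using the Laplace-type representation~\eqref{eq:tricomi_integral}. Its hypotheses $a>0$, $b>1$ hold, since $b=a+1>1$. With $b=a+1$ the exponent $b-a-1$ vanishes, so the factor $(1+t)^{b-a-1}$ is identically $1$ and
\begin{equation*}
U(a,a+1,z)=\frac{1}{\Gamma(a)}\int_0^\infty e^{-zt}\,t^{a-1}\,\mathrm{d}t .
\end{equation*}
For real $z>0$, substituting $u=zt$ turns this into $z^{-a}\Gamma(a)/\Gamma(a)=z^{-a}$, which is the Euler Gamma integral.

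Next I extend the identity to $\Re\{z\}>0$ by analytic continuation. By Lemma~\ref{lem:analyticity}, $z\mapsto U(a,a+1,z)$ is analytic in the right half-plane. The principal branch $z^{-a}=\exp(-a\log z)$ is also analytic there. The two functions agree on the positive real axis, which has accumulation points inside the domain, so the identity theorem gives~\eqref{eq:U_aa1} throughout $\Re\{z\}>0$. One could instead rotate the integration contour, but the identity-theorem route avoids any convergence estimates.

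Then I insert the identity into~\eqref{eq:WU_def}. Setting $w=(s\tau)^{-a}$, we have
\begin{equation*}
\Phi(w)=\frac{w}{1+w}=\frac{1}{1+w^{-1}}=\frac{1}{1+(s\tau)^{a}},
\end{equation*}
where $w^{-1}=(s\tau)^a$ holds for the principal branch because $s\tau$ lies in the open right half-plane. The denominator does not vanish there: $|\arg (s\tau)^a|<a\pi/2$, which is at most $\pi/2$ when $a\le 1$. For larger $a$, $1+(s\tau)^a\neq 0$ still holds on real $s>0$, and the identity remains meromorphic on the half-plane. Substituting into~\eqref{eq:ZU_def} gives~\eqref{eq:ZU_cole_cole}.

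Finally, for $0<a\le1$, comparing with~\eqref{eq:WCC} identifies $\alpha=a$ and $\tau_{\mathrm{CC}}=\tau$. Taking $a=1$ gives $b=2$ and $U(1,2,z)=z^{-1}$, which recovers~\eqref{eq:comp-debye-W}--\eqref{eq:comp-debye-Z}.

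There is no substantive obstacle. The only care needed is the branch bookkeeping: fix the principal branch of $z^{-a}$ consistently, and make sure the identity theorem is applied on a connected domain where both sides are analytic.
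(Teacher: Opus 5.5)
Your proposal is correct and follows the paper's proof. With $b=a+1$ the factor $(1+t)^{b-a-1}$ collapses to $1$, the Euler Gamma integral gives $z^{-a}$, and substituting $z=s\tau$ into $\Phi$ yields the Cole--Cole and Debye forms. The only differences are that you reach complex $z$ via the identity theorem, where the paper evaluates the integral directly for $\Re\{z\}>0$, and that your branch and pole bookkeeping is extra care the paper omits; this includes noting that for larger $a$ both sides of the $W_U$ identity share the same poles.
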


\begin{proof}
If $b=a+1$, the factor $(1+t)^{\,b-a-1}$ in~\eqref{eq:tricomi_integral} reduces to unity, so
\begin{equation}
\begin{aligned}
U(a,a+1,z)
&=
\frac{1}{\Gamma(a)}
\int_0^\infty e^{-zt} t^{a-1}\,\mathrm{d}t \\
&=
z^{-a},
\qquad \Re\{z\}>0.
\end{aligned}
\end{equation}
Substituting $z=s\tau$ into~\eqref{eq:WU_def} gives
\[
W_U(s)=\frac{(s\tau)^{-a}}{1+(s\tau)^{-a}}
      =\frac{1}{1+(s\tau)^a},
\]
and~\eqref{eq:ZU_cole_cole} follows from~\eqref{eq:ZU_def}.
\end{proof}

Proposition~\ref{prop:exact_cole_cole} shows that the relation $b=a+1$ is not merely an asymptotic matching condition: it embeds the Cole--Cole family exactly into the Tricomi construction. In particular, the symmetric case in which the low- and high-frequency exponents coincide is precisely the Cole--Cole subfamily.

Outside the manifold $b=a+1$, the asymptotics of $W_U$ remain
\begin{equation}
\begin{aligned}
W_U(s) &\overset{s\to 0^+}{\sim}
1-\frac{\Gamma(a)}{\Gamma(b-1)}(s\tau)^{\,b-1}, \\
W_U(s) &\overset{|s|\to\infty}{\sim}
(s\tau)^{-a},
\end{aligned}
\label{eq:WU_asym_comp}
\end{equation}
so the two algebraic regimes are governed independently by $b-1$ and $a$. Whenever $b\neq a+1$, the response cannot be reduced to a Cole--Cole element, which is constrained by a single exponent on both sides of the dispersive transition. The Tricomi-based element therefore provides a strict generalisation of Cole--Cole, retaining it as an exact one-parameter subfamily while allowing asymmetric power-law transitions between the two resistive plateaux.

The comparison above highlights the role of the Tricomi construction as a non-fractional relaxation block that contains the classical Debye and Cole--Cole models but also extends them to a wider class of bounded passive responses. In the next section we exploit the bounded mapping to derive a Stieltjes spectral representation and passive realisations suitable for circuit and state-space formulations.

\section{A passive bounded model based on the Tricomi function}
\label{sec:model}

After introducing the Tricomi-based impedance element and characterising its algebraic regimes and classical limits, we now focus on the passive bounded two-plateau model generated by the same Tricomi construction. More precisely, we restrict attention to the admissible parameter range in which the low-frequency divergence of the bare kernel is regularised by the M\"obius normalisation and the resulting element admits a Stieltjes spectral representation with nonnegative density. This provides the basis for establishing passivity and for deriving constructive rational and state-space realisations.

Ideal constant-phase elements (CPEs) correspond to power-law responses that, if extended down to arbitrarily low frequencies, imply unbounded long-time memory and ill-defined static limits. This motivates bounded regularizations that preserve the useful mid-band behaviour while enforcing bounded direct current (DC) and high-frequency asymptotes~\cite{holm2021CPE}.

Here we consider the same unity-feedback M\"obius normalisation introduced in~\eqref{eq:WU_def}, namely
\begin{equation}
F_{a,b}(z)=\Phi\!\bigl(U(a,b,z)\bigr)=\frac{U}{1+U}=1-\frac{1}{1+U},
\label{eq:F_def}
\end{equation}
where $\Phi(z)=z/(1+z)$ and $U=U(a,b,z)$. We use the variable $z$ here to emphasize that the construction is not restricted to the specific argument $z=s\tau$, although that is the case of interest for transfer laws and impedance elements.

For $b\in(1,2)$, the Tricomi function exhibits the low-frequency scaling along the imaginary axis
\begin{equation}
U(a,b,\jmath\omega\tau) \overset{\omega\to 0}{\sim} (\jmath\omega\tau)^{\,1-b},
\qquad b\in(1,2),
\label{eq:U_LF}
\end{equation}
so that $|U|\to\infty$ as $\omega\to 0$, and $U$ alone cannot define a bounded two-plateau response with a finite static limit. The normalised mapping $F_{a,b}$ removes this divergence by saturating the low-frequency behaviour to a finite plateau. Moreover, from~\eqref{eq:U_LF}, one has $F_{a,b}(\jmath\omega\tau)\to 1$ as $\omega\to 0$ for $b\in(1,2)$. At the opposite extreme, the large-argument behaviour of Tricomi $U$ yields $U(a,b,z)\sim z^{-a}$ as $|z|\to\infty$ on the principal branch~\cite{Abramowitz1965handbook,Olver2010NIST}, so $F_{a,b}(\jmath\omega\tau)\to 0$ as $\omega\to\infty$.

To use $F_{a,b}$ as a generic two-plateau building block, the asymptotes are anchored explicitly by the transfer function
\begin{equation}
H(\jmath\omega)=H_\infty+\bigl(H_0-H_\infty\bigr)\,F_{a,b}(\jmath\omega\tau),
\label{eq:H_two_plateau}
\end{equation}
so that $H(\jmath\omega)\to H_0$ as $\omega\to 0$ and $H(\jmath\omega)\to H_\infty$ as $\omega\to\infty$. For time-domain analysis we use the analytic continuation to the right half-plane and write
\[
H(s)=H_\infty+(H_0-H_\infty)F_{a,b}(s\tau),
\qquad \Re\{s\}>0.
\]

\subsection{Spectral Stieltjes representation and passivity}
\label{sec:ss:stieltjes}

To establish the passivity of the bounded Tricomi-based model introduced above, we show that the mapping $F_{a,b}(z)$ is a Stieltjes function for a suitable parameter range. In particular, passivity follows once a nonnegative spectral density is identified on the branch cut of $F_{a,b}$.

We work under the parameter values $a\in(0,1)$ and $b\in(1,2)$, using the principal branch for complex powers and the cut $(-\infty,0]$. For $z\in\mathbb{C}\setminus(-\infty,0]$ the spectral density on the cut is defined through the Stieltjes--Perron inversion
\begin{equation}
\rho_F(x):=-\frac{1}{\pi}\Im\!\left\{F_{a,b}\!\bigl(-x+\jmath 0^+\bigr)\right\},\; x>0.
\label{eq:rho_def}
\end{equation}

The following result ensures nonnegative spectral density on the cut and therefore establishes the Stieltjes structure of $F_{a,b}$.

\begin{proposition}[Nonnegative density and passivity]
\label{prop:rho_nonneg}
For $a\in(0,1)$ and $b\in(1,2)$, the density~\eqref{eq:rho_def} satisfies $\rho_F(x)\ge 0$ for all $x>0$.
Consequently, $F_{a,b}$ admits the Stieltjes representation
\begin{equation}
F_{a,b}(z)=\int_{0}^{\infty}\frac{\rho_F(x)}{x+z}\,\mathrm{d}x,
\qquad z\in\mathbb{C}\setminus(-\infty,0],
\label{eq:F_stieltjes}
\end{equation}
and is positive-real for $\Re\{z\}>0$
\begin{equation}
\Re\{F_{a,b}(z)\}=\int_0^\infty \rho_F(x)\,\Re\!\left\{\frac{1}{x+z}\right\}\mathrm{d}x \ge 0.
\label{eq:PR_F}
\end{equation}
\end{proposition}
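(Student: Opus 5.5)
The plan is to compute the boundary values of $U$ on the cut explicitly from the Kummer relation~\eqref{eq:kummer-relation}, and then transfer the sign information to $F_{a,b}=U/(1+U)$ through the identity
\[
\Im\{F_{a,b}\}=\frac{\Im\{U\}}{|1+U|^2},
\]
which holds because $F=1-1/(1+U)$ and $\Im\{-1/(1+U)\}=\Im\{U\}/|1+U|^2$. It therefore suffices to show that $-\Im\{U(a,b,-x+\jmath 0^+)\}\ge 0$ for $x>0$, to check that $1+U$ does not vanish on $\mathbb{C}\setminus(-\infty,0]$ or on either side of the cut, and finally to justify that $F_{a,b}$ is recovered from its density with no extra constant, linear or atomic terms.

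\textbf{Step 1 (density of $U$).} In~\eqref{eq:kummer-relation} both $M(a,b,\cdot)$ and $M(a-b+1,2-b,\cdot)$ are entire and real on the real axis, and $b\notin\mathbb{Z}$. On the upper lip, $z=-x+\jmath0^+$ gives $z^{1-b}=x^{1-b}e^{\jmath\pi(1-b)}$. Hence
\[
\Im\{U(a,b,-x+\jmath0^+)\}=\frac{\Gamma(b-1)}{\Gamma(a)}\,x^{1-b}\sin\bigl(\pi(1-b)\bigr)\,M(a-b+1,2-b,-x).
\]
For $b\in(1,2)$ we have $\Gamma(b-1)>0$ and $\sin(\pi(1-b))<0$, so the sign question reduces to the sign of $M(a-b+1,2-b,-x)$. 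Kummer's transformation gives $M(a-b+1,2-b,-x)=e^{-x}M(1-a,2-b,x)$. Since $1-a>0$ and $2-b>0$ (this is exactly where $a<1$ and $b<2$ enter), every coefficient of the series of $M(1-a,2-b,x)$ is positive. Therefore $\rho_U(x):=-\pi^{-1}\Im\{U(-x+\jmath0^+)\}>0$ for all $x>0$.

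\textbf{Step 2 (density of $F$).} Dividing the result of Step 1 by the modulus factor in the identity above yields
\[
\rho_F(x)=\frac{\rho_U(x)}{|1+U(a,b,-x+\jmath0^+)|^2}.
\]
The denominator cannot vanish on the cut, because $\Im\{U\}\neq 0$ there by Step 1. Off the cut, $1+U\neq 0$ as well. For this I would first show that $U$ is itself a Stieltjes function: by Step 1 its density is positive, and by~\eqref{eq:U_asympt0}--\eqref{eq:U_asympt+infty} it decays like $z^{-a}$ at infinity and grows only like $z^{1-b}$ at zero, which is integrable against $1/(x+z)$. A Stieltjes function maps the upper half-plane into the lower half-plane and $(0,\infty)$ into $(0,\infty)$, so $U\neq -1$ away from the cut. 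This proves $\rho_F(x)\ge 0$.

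\textbf{Step 3 (representation and positive-realness).} I expect this step to be the main technical obstacle: justifying~\eqref{eq:F_stieltjes} with no constant term, no linear term and no singular part of the measure. I would proceed in four moves.
\begin{itemize}
\item Apply Cauchy's formula to $F_{a,b}$ on a keyhole contour around $(-\infty,0]$.
\item Discard the large arc using $F_{a,b}(z)=O(|z|^{-a})$, which follows from~\eqref{eq:U_asympt+infty}, uniformly in sectors.
\item Discard the small circle around the origin using boundedness of $F_{a,b}$ there ($F\to1$ because $|U|\to\infty$ by~\eqref{eq:U_asympt0}).
\item Use that the boundary values are continuous on $(0,\infty)$, which rules out atoms, and that the two lips are complex conjugates by the Schwarz reflection principle, so the jump across the cut is $2\jmath\,\Im\{F\}$.
\end{itemize}
The integrability of $\rho_F(x)/(x+z)$ follows from the endpoint behaviour of $\rho_F$. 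Near $x\to0$ one has $\rho_F=O(x^{b-1})$, and near $x\to\infty$ one has $\rho_F=O(x^{-a})$. Both follow from the asymptotics of $U$ on the lips. For the large-$x$ bound, I would double-check that $\rho_U$ is indeed of order $x^{-a}$ there, using the large-argument form of $U$ on the lips rather than the Kummer sum.

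Once~\eqref{eq:F_stieltjes} is established, the positive-real property~\eqref{eq:PR_F} is immediate. For $\Re\{z\}>0$ and $x>0$ we have $\Re\{1/(x+z)\}=(x+\Re z)/|x+z|^2>0$, and $\rho_F\ge 0$, so the integral in~\eqref{eq:PR_F} is nonnegative.
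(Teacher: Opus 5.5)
Your proposal is correct and follows the paper's own argument: Kummer connection formula on the upper lip, the sign of $\sin(\pi(1-b))$, the Kummer transformation $M(a-b+1,2-b,-x)=e^{-x}M(1-a,2-b,x)$ to get positivity, and the identity $\Im\{F_{a,b}\}=\Im\{U\}/|1+U|^2$. Your Steps~2--3 go further than the appendix by proving two points it only asserts or cites as ``standard'': that $1+U\neq 0$ on $\mathbb{C}\setminus(-\infty,0]$, which you show by proving $U$ is itself a Stieltjes function, and that the representation~\eqref{eq:F_stieltjes} has no constant or atomic terms, which you show with the keyhole contour and the endpoint bounds $\rho_F=O(x^{b-1})$ and $\rho_F=O(x^{-a})$.
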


Together,~\eqref{eq:F_stieltjes}--\eqref{eq:PR_F} yield a complete Stieltjes structure for $F_{a,b}$ and therefore guarantee passivity and stability of the model.
A full proof of Proposition~\ref{prop:rho_nonneg} is given in Appendix~\ref{app:rho_proof}.

\subsection{Parameter effects in the anchored two-plateau response}
\label{sec:ss:param_effects}

Having established passivity, the qualitative effect of the parameters on the anchored response~\eqref{eq:H_two_plateau} is properly summarized in Fig.~\ref{fig:F_properties}, where we notice that the approach to the plateaus is algebraic and separates the roles of $a$ and $b$.

When $|U|\gg 1$, so for low frequencies, $F_{a,b}=1-(1+U)^{-1}=1-U^{-1}+o(U^{-1})$, and using~\eqref{eq:U_LF} yields
\begin{equation}
1-F_{a,b}(\jmath\omega\tau)\overset{\omega\to 0}{\sim} \frac{1}{U(a,b,\jmath\omega\tau)}\overset{\omega\to 0}{\sim} (\jmath\omega\tau)^{\,b-1}.
\label{eq:LF_tail}
\end{equation}
Substituting into~\eqref{eq:H_two_plateau} gives $H_0-H(\jmath\omega)=(H_0-H_\infty)(1-F)$, so that
\begin{equation}
|H_0-H(\jmath\omega)| \propto \omega^{\,b-1},
\;
\angle\!\bigl(H_0-H(\jmath\omega)\bigr) \to 90(b-1)^\circ,
\label{eq:LF_slope_phase}
\end{equation}
i.e. the exponent $b-1$ governs the algebraic approach to the DC plateau.

Conversely, when $|U|\ll 1$, namely in the high-frequency regime, $F_{a,b}=U+O(U^2)$ and $U(a,b,z)\sim z^{-a}$ yields
\begin{equation}
F_{a,b}(\jmath\omega\tau) \overset{\omega\to \infty}{\sim} (\jmath\omega\tau)^{-a},
\label{eq:HF_tail}
\end{equation}
so that $H(\jmath\omega)-H_\infty=(H_0-H_\infty)F$ and therefore
\begin{equation}
|H(\jmath\omega)-H_\infty| \propto \omega^{-a},
\;
\angle\!\bigl(H(\jmath\omega)-H_\infty\bigr) \to -90a^\circ,
\label{eq:HF_slope_phase}
\end{equation}
i.e. $a$ controls the algebraic approach to the high-frequency plateau.
The anchoring parameters $H_0$ and $H_\infty$ set plateau levels and the step amplitude without altering the intrinsic shape induced by $a$ and $b$, while $\tau$ shifts the dispersion along the frequency axis.
In computations,~\eqref{eq:F_def} is evaluated as $F_{a,b}=1-(1+U)^{-1}$ to avoid loss of significance when $|U|\gg 1$.

The resulting low- and high-frequency slopes and asymptotic phase limits in~\eqref{eq:LF_slope_phase} and~\eqref{eq:HF_slope_phase} are fully consistent with the frequency-domain behaviour of the normalised mapping $W_U(\jmath\omega)$ discussed in Sect.~\ref{subsec:freq-characterisation}.
\begin{figure}[!htb]
    \centering
    \includegraphics[
        width=0.9\textwidth,
        height=0.78\textheight,
        keepaspectratio
    ]{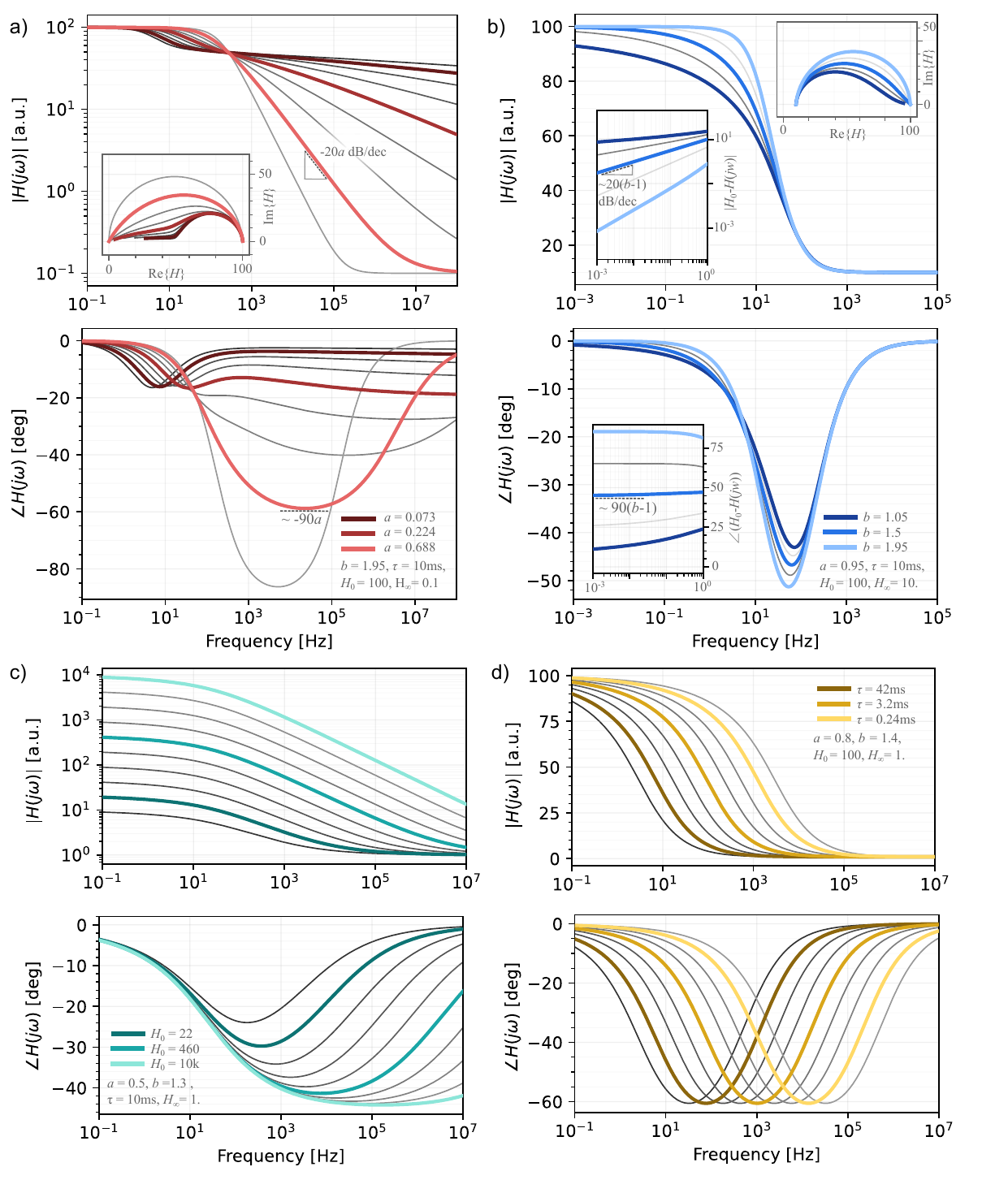}
    \caption[Parameter effects in the two-plateau response]{Parameter effects in the two-plateau response $H(\jmath\omega)=H_\infty+(H_0-H_\infty)F_{a,b}(\jmath\omega\tau)$. Panels illustrate how each parameter shapes the magnitude, phase, Nyquist profile and asymptotic slopes.
    (a) Varying $a$ (with $b$ fixed) controls the high-frequency algebraic tail:
$H(\jmath\omega)-H_\infty \propto \omega^{-a}$ with asymptotic phase $\to -90a^\circ$, slope $\approx -20a$ dB/dec.
    (b) Varying $b$ (with $a$ fixed) controls the low-frequency approach to the DC plateau: $H_0-H(\jmath\omega)\propto \omega^{\,b-1}$ with phase $\to 90(b-1)^\circ$, slope $\approx 20(b-1)$ dB/dec.
    (c) Varying $H_0$ rescales the step amplitude without changing the intrinsic dispersion shape.
    (d) Varying $\tau$ shifts the dispersion along the frequency axis (fixed $a,b,H_0,H_\infty$).}
    \label{fig:F_properties}
\end{figure}

\subsection{Exact time-domain response}
\label{sec:ss:time_exact}

The Stieltjes representation~\eqref{eq:F_stieltjes} provides a direct and explicit time-domain interpretation of the bounded Tricomi block.
Since $\rho_F(x)\ge 0$, the mapping
\begin{equation}\label{eq:Fstielties}
F_{a,b}(s\tau)=\int_0^\infty \frac{\rho_F(x)}{x+s\tau}\,\mathrm{d}x\,,
\qquad\Re\{s\}>0    \,,
\end{equation}
is the Laplace transform of a nonnegative completely monotone kernel. Writing $(x+s\tau)^{-1}=\tau^{-1}(s+x/\tau)^{-1}$ and inverting termwise gives the impulse response of the bounded block as an exact mixture of decaying exponentials, namely
\begin{equation}
\begin{aligned}
f(t)
:= \mathcal{L}^{-1}\!\{F_{a,b}(s\tau)\}(t) 
= \frac{1}{\tau}\int_0^\infty \rho_F(x)\,e^{-xt/\tau}\,\mathrm{d}x,
\qquad t>0.
\end{aligned}
\label{eq:f_kernel}
\end{equation}
Hence $f(t)$ is completely monotone and provides a direct time-domain signature of passivity and stability.

For reference, the underlying Tricomi kernel associated with $U(a,b,s\tau)$ is given by the integral representation~\eqref{eq:tricomi_integral}, which yields the explicit closed-form impulse response for $t>0$
\begin{equation}
\begin{aligned}
u_U(t)
= \mathcal{L}^{-1}\!\{U(a,b,s\tau)\}(t) = \frac{1}{\tau\Gamma(a)}
\left(\frac{t}{\tau}\right)^{a-1}
\left(1+\frac{t}{\tau}\right)^{b-a-1},
\end{aligned}
\label{eq:uU_kernel}
\end{equation}
coinciding exactly with the rescaled Tricomi kernel $g_{a,b,\tau}(t)$ introduced in~\eqref{eq:k_ab_tau}. Thus, the underlying Tricomi kernel is nonnegative, while the bounded block acquires complete monotonicity through its own Stieltjes representation with nonnegative spectral density.

The anchored two-plateau response in~\eqref{eq:H_two_plateau} follows from the linearity of the Laplace transform. Indeed, taking $H(s)=H_\infty+(H_0-H_\infty)F_{a,b}(s\tau)$, and inverting it termwise gives
\begin{equation}
h(t):=\mathcal{L}^{-1}\!\{H(s)\}(t)
=H_\infty\,\delta(t)+(H_0-H_\infty)\,f(t),
\label{eq:h_kernel}
\end{equation}
which is defined in the sense of distributions for $t\ge 0$. The term $H_\infty\,\delta(t)$ encodes the instantaneous high-frequency contribution, whereas $f(t)$ describes the dispersive relaxation generated by the bounded Tricomi block.

It is often useful to derive the associated step response. Since $\displaystyle\frac{1}{x+s\tau}\cdot \frac{1}{s}$ inverts $\displaystyle\frac{1-e^{-xt/\tau}}{x}$, from~\eqref{eq:Fstielties} we obtain
\begin{equation}
\begin{aligned}
g(t)
:= \mathcal{L}^{-1}\!\{F_{a,b}(s\tau)/s\}(t) = \int_0^\infty \rho_F(x)\,\frac{1-e^{-xt/\tau}}{x}\,\mathrm{d}x,
\quad t\geq 0,
\end{aligned}
\label{eq:g_kernel}
\end{equation}
where $g(t)$ is now the step response of the bounded block.

Differentiating under the integral sign gives $g'(t)=f(t)$, which is fully consistent with~\eqref{eq:f_kernel}. Finally, substituting $F_{a,b}(s\tau)/s$ into the anchored model yields
\begin{equation}
\mathcal{L}^{-1}\!\{H(s)/s\}(t)
=H_\infty\,\Theta(t)+(H_0-H_\infty)\,g(t),
\end{equation}
providing an explicit and passive time-domain realisation of both the impulse and step responses of the bounded two-plateau model defined for $t\ge 0$,  where $\Theta(t)$ denotes the Heaviside step function with $\Theta(t)=0$ for $t<0$ and $\Theta(t)=1$ for $t>0$.

\subsection{Rational approximation and first-order state-space realisation}
\label{sec:ss:ss_realisation}

Evaluating~\eqref{eq:F_stieltjes} at $z=0$ and using $F_{a,b}(0)=1$ yields
\begin{equation}
\int_0^\infty \frac{\rho_F(x)}{x}\,\mathrm{d}x = 1.
\end{equation}
This motivates the normalised positive measure
\begin{equation}
\mathrm{d}P(x):=\frac{\rho_F(x)}{x}\,\mathrm{d}x,
\label{eq:P_measure}
\end{equation}
on $(0,\infty)$.
Hence, using the identity
\[
\frac{x}{x+z}=1-\frac{z}{x+z},
\]
and the normalisation
\[
 \int_0^\infty dP(x)=1,
\]
we obtain the exact decomposition
\begin{equation}
F_{a,b}(z)=\int_0^\infty \frac{x}{x+z}\,\mathrm{d}P(x)
=1-z\,S(z),
\label{eq:ss:F_from_S}
\end{equation}
defining
\begin{equation}
S(z):=\int_0^\infty \frac{\mathrm{d}P(x)}{x+z},
\label{eq:Sdef}
\end{equation}
valid for $z\in\mathbb{C}\setminus(-\infty,0]$.
This formulation separates the unit DC plateau $F_{a,b}(0)=1$ from the non-trivial dispersive behaviour encoded by the transform $S(z)$.

The function $S(z)$ also admits a convenient integral representation on the log--rate axis. Introducing the variable $u=\ln x$ as seen in the previous section, the measure $\mathrm{d}P(x)$ becomes $p(u)\,\mathrm{d}u$ with $p(u)=\rho_F(e^u)$,
so that
\begin{equation}
S(z)=\int_{-\infty}^{\infty} \frac{p(u)}{e^{u}+z}\,\mathrm{d}u.
\label{eq:S_log_form}
\end{equation}
This representation avoids Jacobian amplification and is well suited to numerical quadrature on extended log-ranges.

Let $\{(x_m,w_m)\}_{m=1}^{N}$ denote the Gauss--Stieltjes nodes and weights associated with the positive measure $\mathrm{d}P(x)$, equivalently with its log-rate form $p(u)\,\mathrm{d}u$ under the change of variables $u=\ln x$ \cite{gautschi2004orthogonal}. Applying Gauss quadrature to~\eqref{eq:S_log_form} yields
\begin{equation}
S(z)\approx S_N(z):=\sum_{m=1}^{N} \frac{w_m}{x_m+z},
\label{eq:SN_def}
\end{equation}
being $x_m>0$ and $w_m>0$, and satisfying $\sum_{m=1}^{N} w_m=1$.
In practice, the spectral density $\rho_F(x)$ is evaluated numerically from the boundary values of the Tricomi function along the branch cut, and standard Gauss--Stieltjes or Gauss--Jacobi quadrature routines can  be employed to generate the corresponding nodes and weights.
Substituting now~\eqref{eq:SN_def} into~\eqref{eq:ss:F_from_S} gives the rational approximation
\begin{equation}
F_{a,b}(z)\approx F_N(z):=1-z\,S_N(z)
=\sum_{m=1}^{N} \frac{w_m x_m}{x_m+z},
\label{eq:FN_def}
\end{equation}
which is a Foster-type positive-real expansion, namely all poles $z=-x_m$ lie on the negative real axis and all residues $w_m x_m$ are nonnegative. The DC plateau is matched exactly since $F_N(0)=\sum_m w_m=1$.

A corresponding state-space realisation is obtained by writing $z=s\tau$ and defining
\[
\lambda_m:=\frac{x_m}{\tau},
\qquad
r_m:=\frac{w_m x_m}{\tau},
\]
the approximation~\eqref{eq:FN_def} becomes
\begin{equation}
F_N(s\tau)=\sum_{m=1}^{N}\frac{r_m}{s+\lambda_m},
\qquad \lambda_m>0,\; r_m>0.
\label{eq:FN_state_form}
\end{equation}
This admits the diagonal realisation
\begin{equation}
\begin{aligned}
\mathbf{A} &= -\mathrm{diag}(\lambda_1,\dots,\lambda_N), \\
\mathbf{B} &= \mathbf{1}, \\
\mathbf{C} &= \begin{bmatrix} r_1 & \cdots & r_N \end{bmatrix}, \\
D &= 0,
\end{aligned}
\label{eq:ss_realisation}
\end{equation}
so that $F_N(s\tau)=\mathbf{C}(s\mathbf{I}-\mathbf{A})^{-1}\mathbf{B}+D$.
Stability follows from $\lambda_m>0$, and passivity follows from the positive-real Foster form~\eqref{eq:FN_def}.

The same approximation also admits the standard Foster circuit interpretation shown in Fig.~\ref{fig:foster_rc}. When the anchored response is interpreted as an impedance, the positive-real expansion corresponds to a finite series connection of passive parallel $R_m\parallel C_m$ {branches, together with the high-frequency term} $Z_\infty$. The positivity of the Gauss--Stieltjes nodes and weights guarantees positive poles and residues, and therefore positive circuit parameters.

\begin{figure}[t]
    \centering
    \includegraphics[width=0.75\linewidth]{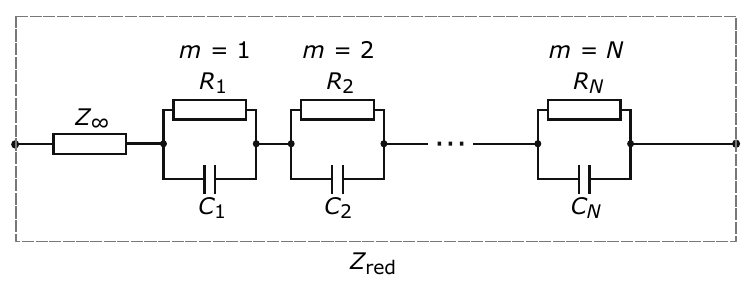}
    \caption{Foster-type passive realisation of the Gauss--Stieltjes reduced Tricomi block. When the anchored response is interpreted as an impedance, the reduced positive-real approximation can be represented as a series connection of passive $R_m\parallel C_m$ branches, together with the high-frequency term $Z_\infty$. Since the Gauss--Stieltjes construction yields positive poles and residues, all circuit parameters are positive and the finite-dimensional approximation preserves causality, stability and passivity by construction.}
    \label{fig:foster_rc}
\end{figure}

For the anchored approximation, inserting~\eqref{eq:FN_state_form} into the anchored model~\eqref{eq:H_two_plateau} yields
\begin{equation}
H(s)\approx H_{\infty}
+\bigl(H_{0}-H_{\infty}\bigr)F_N(s\tau),
\label{eq:H_approx}
\end{equation}
providing a passive $N$-mode realisation of the bounded two-plateau response.
Multiple dispersive processes can be accommodated by summing independent blocks with their own parameters $a_n$, $b_n$, $\tau_n$ and amplitudes $\Delta H_n$, without affecting stability or passivity.

\section{Methods and computational pipeline}
\label{sec:methods}

\subsection{Tissue dielectric data and preprocessing}
\label{sec:methods:tissues}

We validated the proposed bounded Tricomi block on broadband tissue dielectric measurements from the Gabriel dataset, using the tabulated experimental data reported in the measurement study of Gabriel and co-workers~\cite{gabriel1996ii}. For benchmark comparison, the classical Cole--Cole-type baseline was taken from the IFAC-CNR online resource~\cite{andreuccetti2012}, which implements the parametric Gabriel model rather than the original tabulated measurements.

For each tissue, the available data were given as relative permittivity $\varepsilon'(\omega)$ and effective conductivity $\sigma(\omega)$ over a frequency grid $\{f_k\}_{k=1}^{K}$, with $\omega_k = 2\pi f_k$. To keep the formulation consistent with dispersive dielectrics, we work with the complex relative permittivity
\begin{equation}
\hat{\varepsilon}_{\mathrm{exp}}(\omega_k)
=
\varepsilon'_{\mathrm{exp}}(\omega_k)
+\frac{\sigma_{\mathrm{exp}}(\omega_k)}{\jmath\omega_k\varepsilon_0},
\label{eq:methods:epshat_exp}
\end{equation}
where $k=1,\dots,K$ and $\varepsilon_0$ denotes the vacuum permittivity.

Prior to fitting, the data were filtered to remove non-finite entries and non-positive frequencies, and were then sorted in increasing order of frequency.

When an impedance-domain visualisation was desired, the fitted dielectric response was mapped to a homogeneous slab of thickness $L$ and electrode area $A$ through the complex admittance
\begin{equation}
\begin{aligned}
Y(\omega)
&=
\frac{A}{L}\Bigl(
\sigma_{\mathrm{model}}(\omega)
+ \jmath\omega\varepsilon_0\,\varepsilon'_{\mathrm{model}}(\omega)
\Bigr), \\
Z(\omega)
&= \frac{1}{Y(\omega)}.
\end{aligned}
\label{eq:methods:slab_impedance}
\end{equation}
This mapping was used only for visualisation and does not affect the fitting procedure, which was carried out in the dielectric domain.

\subsection{Multi-block Tricomi permittivity model}
\label{sec:methods:eps_model}

To model tissue dispersion with multiple relaxation processes, we superposed $N$ bounded Tricomi blocks. Using the normalised mapping $F_{a,b}(z)=U(a,b,z)/(1+U(a,b,z))$, the complex relative permittivity was written as
\begin{equation}
\hat{\varepsilon}_{\mathrm{model}}(\omega)
=
\varepsilon_{\infty}
+
\frac{\sigma_i}{\jmath\omega\varepsilon_0}
+
\sum_{n=1}^{N}
\Delta\varepsilon_n\,
F_{a_n,b_n}(\jmath\omega\tau_n),
\label{eq:methods:epshat_model}
\end{equation}
where $\varepsilon_{\infty}$ denotes the high-frequency plateau, $\sigma_i$ an ionic-conduction term, and each dispersive block is parametrised by $\Delta\varepsilon_n$, $a_n$, $b_n$, $\tau_n$ with $\Delta\varepsilon_n>0$ and $\tau_n>0$. In agreement with the passive Stieltjes regime established in Sect.~\ref{sec:ss:stieltjes}, the shape parameters were constrained to $a_n\in(0,1)$ and $b_n\in(1,2)$.

The model observables were obtained as
\begin{equation}
\begin{aligned}
\varepsilon'_{\mathrm{model}}(\omega)&=\Re\{\hat{\varepsilon}_{\mathrm{model}}(\omega)\},
\\
\sigma_{\mathrm{model}}(\omega)&=-\omega\varepsilon_0\,\Im\{\hat{\varepsilon}_{\mathrm{model}}(\omega)\}.
\label{eq:methods:eps_sigma_from_epshat}
\end{aligned}
\end{equation}
For the tissue study, candidate model orders $N=2,3,4$ were fitted and compared.

\subsection{Parameter estimation and weighting across decades}
\label{sec:methods:fit}

Parameters are estimated by minimising a weighted discrepancy in decades (log-ratio errors), which is well suited to the large dynamic range of dielectric spectra.
For each frequency sample $\omega_k$ we define
\begin{equation}
\begin{aligned}
e_{\varepsilon}(k)
&= \log_{10}\!\left(
\frac{\varepsilon'_{\mathrm{model}}(\omega_k)}
{\varepsilon'_{\mathrm{exp}}(\omega_k)}
\right), \\
e_{\sigma}(k)
&= \log_{10}\!\left(
\frac{\sigma_{\mathrm{model}}(\omega_k)}
{\sigma_{\mathrm{exp}}(\omega_k)}
\right),
\end{aligned}
\label{eq:methods:log_errors}
\end{equation}
after applying small positive floors to avoid logarithms of non-positive values.
These errors are stacked into a residual vector
\begin{equation}
\mathbf{r}(\boldsymbol{\theta})
=
\begin{bmatrix}
W_{\varepsilon}\,\mathbf{W}^{1/2}\, \mathbf{e}_{\varepsilon}\\[2pt]
W_{\sigma}\,\mathbf{W}^{1/2}\, \mathbf{e}_{\sigma}
\end{bmatrix},
\label{eq:methods:residual_vector}
\end{equation}
where $\boldsymbol{\theta}$ collects all model parameters, $(W_{\varepsilon},W_{\sigma})$ are user-defined relative weights, and $\mathbf{W}=\mathrm{diag}(w_1,\dots,w_K)$ applies per-frequency weighting.
To balance the contribution of each frequency decade, we set $w_k \propto \Delta\log_{10}(f_k)$, so that the resulting sum of squares approximates a Riemann integral over $\log_{10}(f)$, and for log-uniform grids this reduces to nearly uniform weights.

Optimisation is performed in two stages. First, a global search using differential evolution provides a robust initialisation on a log-subsampled frequency subset. Second, a local refinement uses robust nonlinear least squares with a soft-$\ell_1$ loss on the full band. To prevent label switching between blocks, the time constants $\tau_n$ are sorted after each parameter update, and the corresponding $\Delta\varepsilon_n$, $a_n$ and $b_n$ are permuted consistently. Parameter bounds are enforced on $\varepsilon_{\infty}$, $\sigma_i$, $\Delta\varepsilon_n$, $a_n$, $b_n$ and $\tau_n$, while amplitudes and time constants are optimised in base-10 logarithmic coordinates.

\subsection{Complex-domain validation, model selection and bootstrap mode analysis}
\label{sec:methods:validation_bootstrap}

Although parameter estimation for tissues was performed on logarithmic residuals of $\varepsilon'(\omega)$ and $\sigma(\omega)$, model comparison was additionally assessed in the complex domain. For each fitted model, we evaluated the pointwise relative complex error
\begin{equation}
e_{\mathrm{c}}(\omega_k)
=
\frac{
\left|
\hat{\varepsilon}_{\mathrm{model}}(\omega_k)
-
\hat{\varepsilon}_{\mathrm{exp}}(\omega_k)
\right|
}{
\left|
\hat{\varepsilon}_{\mathrm{exp}}(\omega_k)
\right|
},
\quad k=1,\dots,K,
\label{eq:methods:complex_error}
\end{equation}
and summarised it through the root-mean-square error
\begin{equation}
\mathrm{RMSE}_{\mathrm{c}}
=
\left(
\frac{1}{K}\sum_{k=1}^{K} e_{\mathrm{c}}(\omega_k)^2
\right)^{1/2}.
\label{eq:methods:complex_rmse}
\end{equation}
This quantity was used only for cross-model comparison on the experimental Gabriel frequency grid.

For model-order comparison within the Tricomi family, comparative information criteria were computed from the same weighted logarithmic residual vector used in the fit, namely Eq.~\eqref{eq:methods:residual_vector}. In all tissue fits, the relative residual weights were fixed to $W_{\varepsilon}=1.5$ and $W_{\sigma}=1.0$, and the per-frequency weights were chosen proportional to the local spacing in $\log_{10}f$. The resulting weighted residual sum of squares was used to compute comparative AIC- and BIC-type criteria across candidate model orders.

For modal post-processing in tissues, each fitted Tricomi block was represented by an in-band frequency-localised profile. For the $n$-th block, we defined the raw profile
\begin{equation}
\phi_n(f)
=
\Delta\varepsilon_n
\max\!\left\{
-\Im\!\left[
F_{a_n,b_n}(\jmath 2\pi f\tau_n)
\right], 0
\right\},
\label{eq:methods:tissue_mode_raw}
\end{equation}
and the corresponding normalised mode-density profile
\begin{equation}
\psi_n(f)
=
\frac{\phi_n(f)}
{\displaystyle
\int_{\log_{10}f_{\min}}^{\log_{10}f_{\max}}
\phi_n(\xi)\,d\log_{10}\xi }.
\label{eq:methods:tissue_mode_profile}
\end{equation}
By construction, $\psi_n$ has unit area over the measured frequency band and therefore provides a dimensionless descriptor of modal location and spectral width.

To evaluate the stability of the recovered modal profiles, we performed a wild-bootstrap refitting procedure around the fitted Tricomi model. A total of $B=500$ bootstrap refits were generated using Rademacher multipliers. Denoting by $\varepsilon'_{\mathrm{fit}}(\omega_k)$ and $\sigma_{\mathrm{fit}}(\omega_k)$ the fitted observables, bootstrap pseudo-data were generated in logarithmic coordinates by perturbing the fitted responses with sign-flipped residuals,
\begin{equation}
\begin{aligned}
\log_{10}\varepsilon'_{\ast}(\omega_k)
=
\log_{10}\varepsilon'_{\mathrm{fit}}(\omega_k)
+
v_k\Big[
\log_{10}\varepsilon'_{\mathrm{exp}}(\omega_k)
-
\log_{10}\varepsilon'_{\mathrm{fit}}(\omega_k)
\Big],
\end{aligned}
\label{eq:methods:bootstrap_eps}
\end{equation}
\begin{equation}
\begin{aligned}
\log_{10}\sigma_{\ast}(\omega_k)
=
\log_{10}\sigma_{\mathrm{fit}}(\omega_k)
+
v_k\Big[
\log_{10}\sigma_{\mathrm{exp}}(\omega_k)
-
\log_{10}\sigma_{\mathrm{fit}}(\omega_k)
\Big].
\end{aligned}
\label{eq:methods:bootstrap_sigma}
\end{equation}
where $v_k\in\{-1,+1\}$. The wild bootstrap is adopted in this context as it preserves the 
heteroscedastic structure of residuals across frequency decades, 
which is characteristic of logarithmic fitting in broadband 
impedance data. The pseudo-data were transformed back to linear scale and refitted with the same optimisation pipeline used for the original fit, using the original solution as warm start plus a small random perturbation within bounds.

For each bootstrap replicate, the corresponding in-band modal profiles were recomputed and normalised to unit area over $\log_{10}f$. Bootstrap envelopes were summarised pointwise by the median together with the 2.5th and 97.5th percentiles, and were used as empirical indicators of modal stability under data perturbation and refitting.

\subsection{Battery data and cycle-wise fitting in the complex domain}
\label{sec:methods:battery}

To complement the dielectric validation above, we finally considered electrochemical impedance data directly in the complex domain. The experimental spectra were taken from the dataset reported by Zhang et al.~\cite{Zhang2020}. For each selected cycle \(c\), the impedance data are denoted by
\begin{equation}
\mathcal{D}^{(c)}
=
\left\{
\left(f_k, Z_{\mathrm{exp}}^{(c)}(f_k)\right)
\right\}_{k=1}^{K},
\quad
Z_{\mathrm{exp}}^{(c)}(f_k)\in\mathbb{C}.
\end{equation}

The same bounded Tricomi construction introduced in Sect.~\ref{sec:model}, with \(F\) defined in Eq.~\eqref{eq:F_def}, was retained here, but embedded in a parallel-admittance representation suited to battery spectra. For the battery study, the model order was fixed to $N=3$, as this was the lowest order that consistently reproduced the observed multi-arc structure across cycles while preserving a stable and interpretable LF/MF/HF modal decomposition. Using the complementary factor
\begin{equation}
G_{a,b}(z)
:=
1-F_{a,b}(z)
=
\frac{1}{1+U(a,b,z)},
\label{eq:battery_G}
\end{equation}
and writing \(\omega_k = 2\pi f_k\), we define the internal admittance as
\begin{equation}
Y_{\mathrm{int}}^{(c)}(\omega_k;\theta^{(c)})
=
\frac{1}{R_0^{(c)}}
+
\sum_{n=1}^{N}
\frac{1}{R_n^{(c)}}
\,G_{a_n^{(c)},b_n^{(c)}}\!\left(\jmath\omega_k\tau_n^{(c)}\right),
\label{eq:battery_internal_admittance}
\end{equation}
and the model impedance as
\begin{equation}
Z_{\mathrm{mod}}^{(c)}(\omega_k;\theta^{(c)})
=
R_s^{(c)}
+
\jmath\omega_k L_s^{(c)}
+
\left(
Y_{\mathrm{int}}^{(c)}(\omega_k;\theta^{(c)})
\right)^{-1}.
\label{eq:battery_model}
\end{equation}

To preserve positivity constraints and improve numerical conditioning, the optimisation variables were expressed in mixed linear/logarithmic coordinates,
\begin{equation}
\begin{split}
\theta^{(c)}
=
\Big(
\log_{10}R_s,\,
\log_{10}L_s,\,
\log_{10}R_0,\,
\{a_n\}_{n=1}^{N},\,
\{b_n\}_{n=1}^{N},
\{\log_{10}\tau_n\}_{n=1}^{N},\,
\{\log_{10}R_n\}_{n=1}^{N}
\Big)^{(c)},
\end{split}
\label{eq:battery_theta}
\end{equation}
while the shape parameters were constrained to the same practical passive box used throughout the paper,
\begin{equation}
0.05 \le a_n \le 0.95,
\qquad
1.0002 \le b_n \le 1.95.
\label{eq:battery_shape_bounds}
\end{equation}
The bounds for \(R_s\), \(L_s\), \(R_0\), \(\tau_n\), and \(R_n\) were set from the measured impedance range and the acquisition bandwidth. To remove permutation ambiguity between dispersive branches, the fitted blocks were reordered after each evaluation according to increasing \(\tau_n\), with the associated parameters $a_n$, $b_n$ and $R_n$ permuted consistently.

Following the general fitting strategy of Sect.~\ref{sec:methods:fit}, the objective was defined through relative errors in the complex impedance ratio. For each frequency sample, we introduced
\begin{equation}
\rho_k^{(c)}
=
\frac{Z_{\mathrm{mod}}^{(c)}(\omega_k;\theta^{(c)})}
     {Z_{\mathrm{exp}}^{(c)}(\omega_k)},
\label{eq:battery_ratio}
\end{equation}
and the associated magnitude and phase residuals
\begin{equation}
e_{\mathrm{mag},k}^{(c)}
=
\log_{10}\left|\rho_k^{(c)}\right|,
\qquad
e_{\phi,k}^{(c)}
=
\arg\!\left(\rho_k^{(c)}\right).
\label{eq:battery_mag_phase_errors}
\end{equation}
This ratio-based formulation preserves relative sensitivity over the full spectrum and avoids over-emphasising only the largest-magnitude part of the impedance response.

Parameter estimation was carried out in two stages. First, a global differential-evolution search was performed on a log-subsampled subset of the spectrum in order to identify a robust initial parameter region. If \(\mathcal{I}_{\mathrm{DE}}\subset\{1,\dots,K\}\) denotes the subset of indices selected quasi-uniformly in \(\log_{10}f\), the pointwise loss was defined as
\begin{equation}
\ell_k^{(c)}
=
\left(e_{\mathrm{mag},k}^{(c)}\right)^2
+
\alpha_{\mathrm{DE}}
\left(e_{\phi,k}^{(c)}\right)^2,
\label{eq:battery_pointwise_loss}
\end{equation}
and the corresponding global objective as
\begin{equation}
\mathcal{L}_{\mathrm{DE}}^{(c)}
=
\left[
\frac{1}{|\mathcal{I}_{\mathrm{DE}}|}
\sum_{k\in\mathcal{I}_{\mathrm{DE}}}
\ell_k^{(c)}
\right]^{1/2},
\label{eq:battery_loss_de}
\end{equation}
with \(\alpha_{\mathrm{DE}}=0.5\). This stage was used only for coarse exploration of the non-convex parameter landscape.

Second, the differential-evolution solution was refined on the full frequency band by bounded nonlinear least squares with a robust soft-\(\ell_1\) loss. The residual vector supplied to the local solver was
\begin{equation}
\mathbf{r}_{\mathrm{LS}}^{(c)}
=
\begin{bmatrix}
e_{\mathrm{mag},1}^{(c)} \\
\vdots \\
e_{\mathrm{mag},K}^{(c)} \\
\sqrt{\alpha_{\mathrm{LS}}}\,e_{\phi,1}^{(c)} \\
\vdots \\
\sqrt{\alpha_{\mathrm{LS}}}\,e_{\phi,K}^{(c)}
\end{bmatrix},
\qquad
\alpha_{\mathrm{LS}}=1.
\label{eq:battery_residual_vector}
\end{equation}
The final fit quality therefore remained jointly controlled by relative magnitude and phase agreement over the full bandwidth. The post-fit scalar loss reported for each cycle was evaluated from the same log-ratio metric on the full grid.

For post-fit modal analysis, we use the complementary bounded mapping \(F_{a,b}=1-G_{a,b}\) as a geometric descriptor of the dispersive branch shape. This choice is not meant to represent the direct branch admittance contribution itself, which is modelled through \(G_{a,b}\) in Eq.~\eqref{eq:battery_internal_admittance}, but rather to provide a normalized in-band signature of modal location and width that can be compared consistently across cycles. Specifically, for the \(n\)-th branch at cycle \(c\), we define the raw profile
\begin{equation}
\phi_n^{(c)}(f)
=
\max\!\left\{
-\Im\!\left[
F_{a_n^{(c)},b_n^{(c)}}\!\left(\jmath 2\pi f\tau_n^{(c)}\right)
\right],
\,0
\right\},
\label{eq:battery_mode_raw}
\end{equation}
and its normalised version
\begin{equation}
\psi_n^{(c)}(f)
=
\frac{\phi_n^{(c)}(f)}
{\displaystyle
\int_{\log_{10}f_{\min}}^{\log_{10}f_{\max}}
\phi_n^{(c)}(\xi)\,d\log_{10}\xi }.
\label{eq:battery_mode_profile}
\end{equation}
By construction,
\begin{equation}
\int_{\log_{10}f_{\min}}^{\log_{10}f_{\max}}
\psi_n^{(c)}(f)\,d\log_{10}f = 1.
\label{eq:battery_mode_norm}
\end{equation}
Thus, \(\psi_n^{(c)}\) provides a dimensionless in-band descriptor of the spectral location and width of the \(n\)-th fitted contribution.

The characteristic frequency of each normalised profile was defined as
\begin{equation}
f_{p,n}^{(c)}
=
\operatorname*{arg\,max}_{f}\,
\psi_n^{(c)}(f).
\label{eq:battery_peak_frequency}
\end{equation}
This peak frequency was used as a scalar descriptor of modal location for cycle-to-cycle comparison.

\section{Results}
\subsection{Frequency-domain convergence of the Gauss--Stieltjes Foster approximation}
\label{sec:results:convergence}

In Fig.~\ref{fig:fig2}, we compare the anchored continuous response~\eqref{eq:H_two_plateau} against its passive $M$-term rationalisations obtained from the Gauss--Stieltjes construction of Sect.~\ref{sec:ss:ss_realisation}.
Results are reported for $M=5,15,45$ in two representative regimes, including magnitude and phase of $H(\jmath\omega)$, compact Nyquist insets, and the associated pole spectra shown as rug plots.

In the moderate-memory setting $a=0.7$, $b=1.7$, $\tau=1\,\mathrm{s}$ (left column), the lowest order $M=5$ already captures the global shape of both magnitude and phase, with residual discrepancies localised around the dispersive region.
Increasing to $M=15$ yields a close match across the displayed band, while $M=45$ produces a response that is visually indistinguishable from the continuous reference. This monotonic improvement is reflected by the embedded relative-error tables, which report both RMS and maximum errors for magnitude and phase.

\begin{figure*}[!h]
    \centering
    \includegraphics[width=\linewidth]{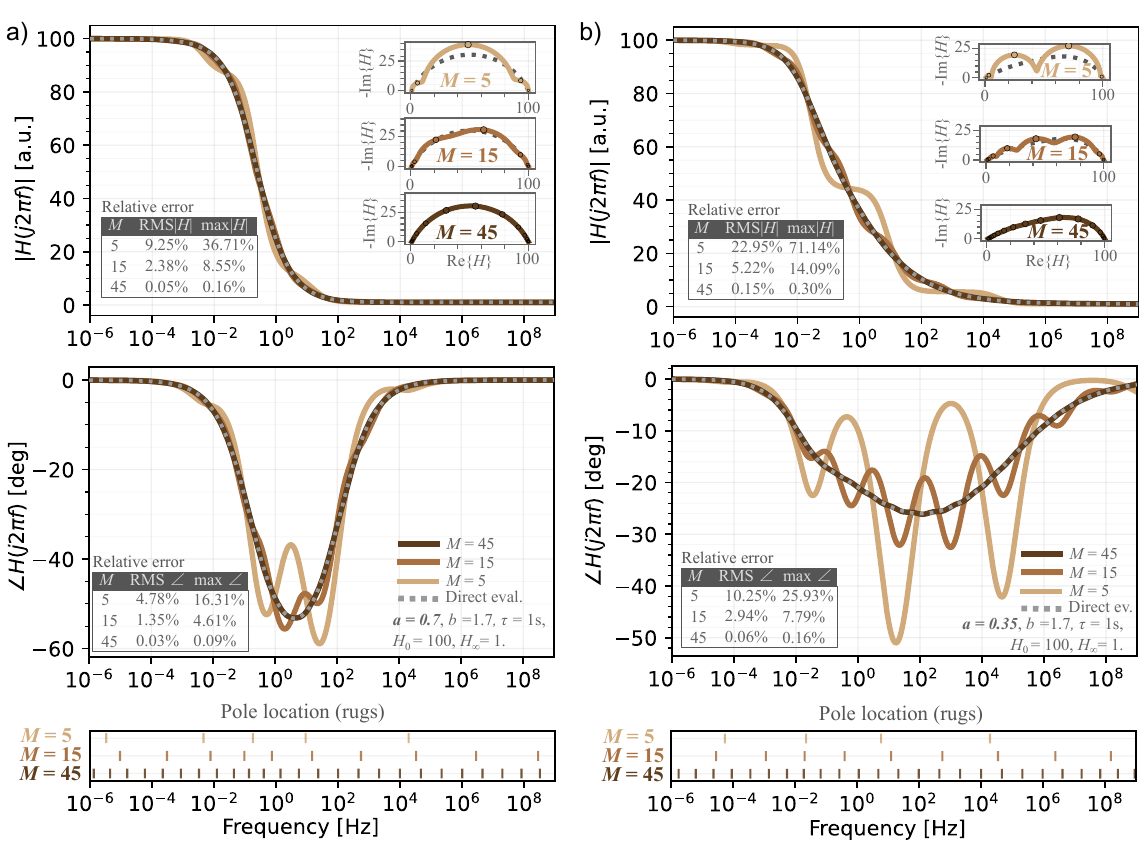}
    \caption{Frequency-domain convergence of the Gauss--Stieltjes Foster rationalisation embedded in the anchored two-plateau model~\eqref{eq:H_two_plateau}.
    Moderate-memory (left: $a=0.7$, $b=1.7$, $\tau=1\,\mathrm{s}$) and long-tail (right: $a=0.35$, $b=1.7$, $\tau=1\,\mathrm{s}$) regimes are shown.
    Direct evaluation of $F_{a,b}(z)$ is compared with passive $M$-term approximations ($M=5,15,45$; Sect.~\ref{sec:ss:ss_realisation}) in magnitude (top) and phase (middle), with Nyquist insets.
    Bottom panels show the corresponding discrete pole spectra (rug plots). Embedded tables report in-band relative error statistics (RMS/max) for magnitude and phase.}
    \label{fig:fig2}
\end{figure*}

The long-tail setting $a=0.35$, $b=1.7$, $\tau=1\,\mathrm{s}$ (right column) is substantially more demanding, as at low order $M=5$ the approximation exhibits clear deviations, most prominently in phase, and the mismatch extends over a broader portion of the transition. Increasing $M$ systematically reduces these discrepancies, with $M=45$ restoring excellent agreement throughout. The error tables confirm that the long-tail regime requires a higher order to reach the same accuracy level observed in the moderate-memory case, consistent with the broader log-rate content implied by longer memory.

The rug plots summarise the pole locations $\{\lambda_m\}$ induced by Gauss quadrature on the log-rate measure $p(u)\,\mathrm{d}u$ in Sect.~\ref{sec:ss:ss_realisation}, and should be interpreted together with the way the quadrature window is defined in practice. For numerical robustness, $p(u)$ is discretised over an extended log-rate interval that expands the displayed frequency band by $\pm 3$ decades on each side before applying the Lanczos/Stieltjes procedure. As a consequence, the method is explicitly allowed to place Gauss nodes (hence poles) outside the plotted band. In long-memory regimes this behaviour is expected: the spectral content that governs the in-band dispersion is spread over a wider range of log-rates, and a passive Stieltjes approximation must represent the out-of-band tails consistently with the in-band dynamics. This is why, for small $M$, some poles may fall outside the observable window (hence fewer visible rug marks), while increasing $M$ densifies the pole set and progressively resolves both the tail structure and the in-band dispersion, leading to the observed reduction of in-band errors.

Overall, Fig.~\ref{fig:fig2} demonstrates that the proposed Gauss--Stieltjes construction yields compact, passive and stable rational models whose accuracy improves systematically with $M$, while preserving the anchored two-plateau structure through $H_0$ and $H_\infty$.

\subsection{Multi-tissue validation in the complex domain}
\label{subsec:multi_tissue_complex_validation}

We considered heart muscle, breast fat, white matter and kidney as representative tissues spanning low-loss to highly conductive regimes and increasing dispersion complexity. Breast fat tests the model in a low-conductivity setting prone to overfitting artefacts, whereas heart muscle and kidney provide highly hydrated, strongly dispersive spectra where systematic bias becomes apparent. White matter introduces a structurally complex intermediate case, offering a stringent benchmark for mode identifiability. This selection enables a compact yet heterogeneous validation across distinct dielectric behaviours.

\begin{figure}[!h]
    \centering
    \includegraphics[width=\linewidth]{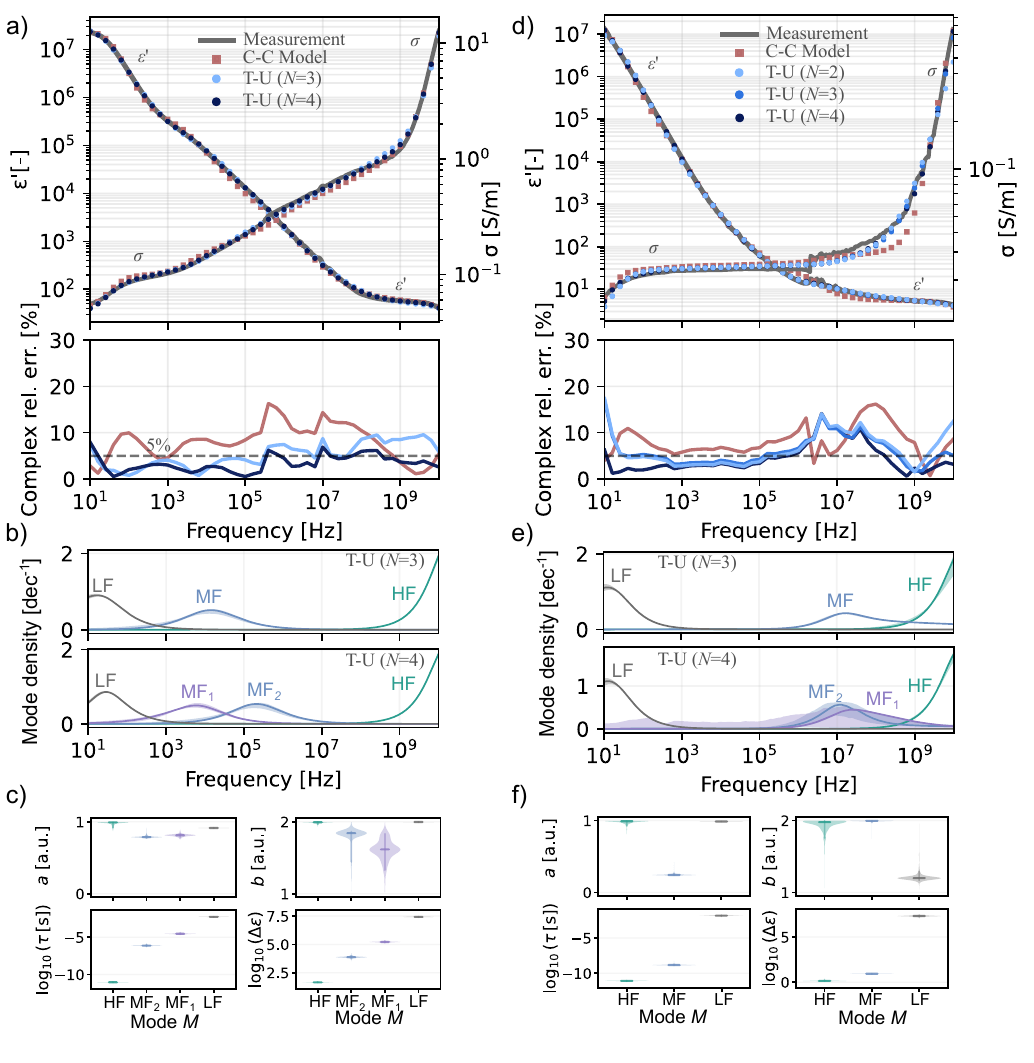}
    \caption{\textcolor{black}{Multi-tissue validation of the multi-block Tricomi--$U$ model. Panels (a)--(f) show model--measurement agreement across representative tissues (heart muscle, breast fat, white matter, kidney), comparing the fitted Tricomi--$U$ mixtures against the tabulated experimental Gabriel data and against the Cole--Cole-type baseline obtained from the IFAC-CNR online Gabriel resource. Overlaid mode-density curves highlight the frequency localisation and relative widths of the recovered relaxation modes. Increasing the number of blocks ($N=2,3,4$) refines mid-band structure where supported by the data, while bootstrap envelopes quantify mode stability across replicates.}}
    \label{fig:fig3}
\end{figure}

We quantified fitting performance directly in the complex domain using the relative error
\begin{equation}
    e(\omega)=\lvert \hat{\varepsilon}_{\mathrm{model}}(\omega)-\hat{\varepsilon}_{\mathrm{exp}}(\omega)\rvert/\lvert \hat{\varepsilon}_{\mathrm{exp}}(\omega)\rvert
\end{equation}
and reported the corresponding RMSE over a common evaluation grid. Across the four tissues, the proposed Tricomi--$U$ mixture achieved systematically lower complex-domain error than the Cole--Cole-type baseline obtained from the IFAC-CNR online Gabriel resource, while all errors were evaluated against the same tabulated experimental Gabriel measurements. In particular, using $N=4$ blocks reduced the RMSE from 8.303\% to 4.494\% in kidney, from 7.045\% to 2.575\% in white matter, from 8.782\% to 3.784\% in heart muscle, and from 8.679\% to 5.424\% in breast fat, corresponding to a relative reduction of approximately 38--63\% depending on tissue. Overall, the average RMSE was approximately halved (8.20\% to 4.07\%), highlighting the advantage of the proposed representation when evaluated against measured spectra in the complex domain.

Model order was selected using comparative information criteria computed from the same weighted logarithmic residual vector used for parameter estimation, as in~\eqref{eq:methods:residual_vector}. For kidney, white matter, and heart muscle, both AIC and BIC favoured $N=4$ over $N=3$, consistent with the pronounced RMSE drop when adding the fourth block (kidney: $\Delta\mathrm{BIC}=-40.84$; white matter: $\Delta\mathrm{BIC}=-26.47$; heart: $\Delta\mathrm{BIC}=-24.37$). In breast fat, the criteria exposed a low-loss regime where additional flexibility can be misleading: AIC preferred $N=4$ whereas BIC selected the more parsimonious $N=2$ model (BIC $-202.41$ for $N=2$ vs. $-199.20$ for $N=4$). In the remainder, we treat BIC as the primary guard against over-parameterization and complement it with a stability check based on bootstrap mode profiles.

More generally, the number of Tricomi blocks should not be increased
solely to reduce the residual, but should be constrained by the expected
number of physically relevant relaxation contributions within the measured
frequency band. In this sense, a fitted branch is interpreted as physically
meaningful only when it is supported by model-order criteria and displays
reproducible in-band modal placement under bootstrap perturbations.
To interpret the fitted response beyond a purely parametric description, we represent each Tricomi--$U$ block as a frequency-localised mode profile within the measured band, namely 10~Hz--10~GHz. For each block, we compute a loss-like profile over log-frequency, clipped to non-negative values, and normalise it to unit area in-band. This yields a mode-density curve that emphasises geometrical features of each mode, namely its characteristic location, where it concentrates in frequency, and its spectral width, enabling direct comparison across blocks, tissues and bootstrap replicates. When $N=4$ is supported (kidney, heart, white matter), the additional block typically refines mid-frequency content rather than introducing a qualitatively new LF or HF component, which aligns with the observed improvement in the transition region of the spectra.

Mode stability is assessed by repeating the modal profile extraction across bootstrap fits and reporting pointwise 95\% intervals (2.5--97.5 percentiles) together with the median profile. Across heart, kidney, and white matter, the dominant modes exhibit compact bootstrap bands and consistent placement, indicating that the modal decomposition is well constrained by the data even when raw parameters may be correlated. Conversely, in low-loss breast fat, increasing the number of blocks can yield an additional mode with broad bootstrap dispersion and inconsistent placement across replicates, mirroring the AIC/BIC disagreement and providing a clear signature of overfitting rather than a reproducible dispersive mechanism.

\subsection{Battery-ageing modelling in the complex domain}
\label{sec:results:battery_ageing_complex}

\begin{figure}[!htbp]
    \centering
    \includegraphics[width=0.5\linewidth,height=\textheight,keepaspectratio]{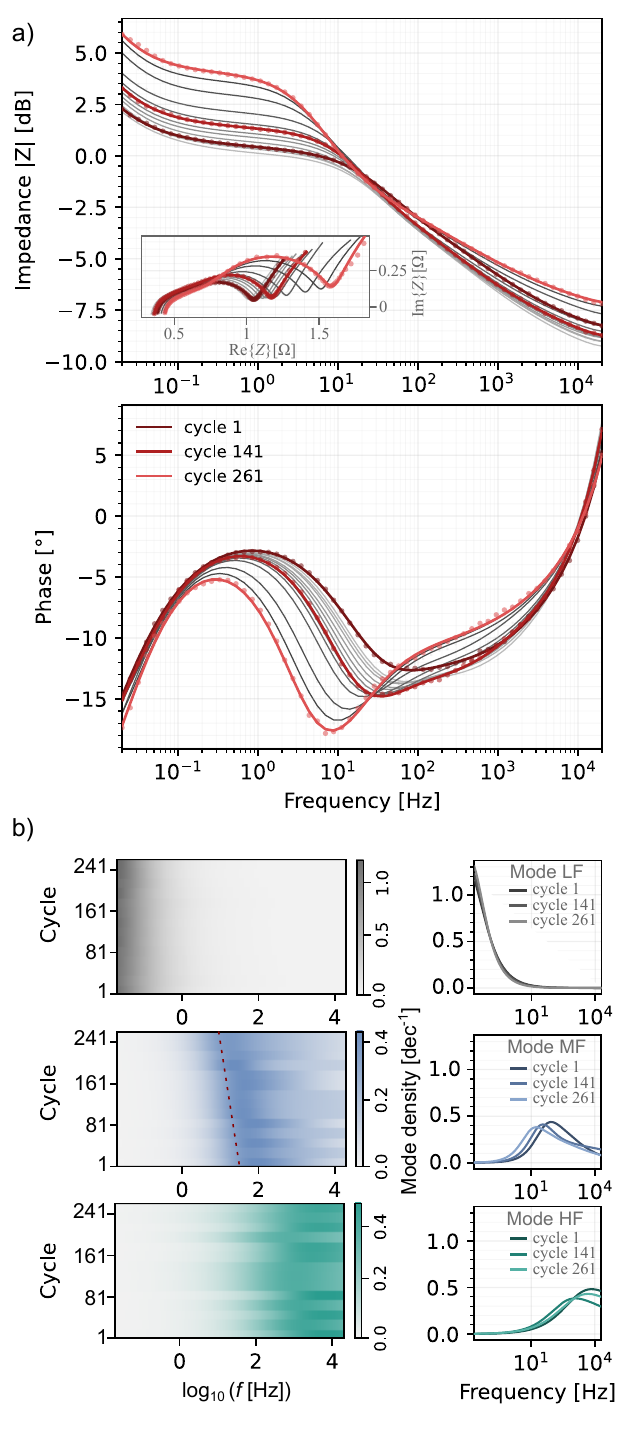}
    \caption{Battery-ageing modelling in the complex domain. Panel (a) compares measured and fitted spectra across selected ageing states in magnitude, phase and Nyquist representation. Panel (b) shows the corresponding normalised in-band mode-density profiles, resolving the LF, MF and HF contributions and their redistribution with cycling.}
    \label{fig:fig4}
\end{figure}

We next examine whether the same multi-block Tricomi framework can resolve ageing-induced spectral redistribution in electrochemical impedance data when the fitting is performed directly in the complex impedance domain.

Figure~\ref{fig:fig4} shows that the three-block Tricomi--$U$ representation accurately reproduces the measured impedance response over the full experimental bandwidth and across the selected ageing states. In panel (a), the fitted curves closely follow the experimental spectra in both impedance magnitude and phase, while the Nyquist inset confirms that the model also captures the progressive reshaping of the depressed arcs. As cycling proceeds, the low-frequency impedance level increases, the phase depression in the intermediate band becomes broader and deeper, and the Nyquist trajectories expand toward larger real impedances. Taken together, these features indicate that ageing is associated with a progressive redistribution of dissipative dynamics toward slower characteristic time scales, rather than with a mere uniform rescaling of the entire spectrum.

Panel (b) provides a modal interpretation of these spectral changes by resolving the fitted response into three normalized in-band mode-density profiles, labelled LF, MF, and HF. Importantly, these profiles are normalized and therefore encode spectral location and shape, not absolute modal amplitude. The LF mode remains concentrated at the lower edge of the observation window and decays monotonically across frequency, with only limited shape variation during cycling. This behaviour suggests that the slowest contribution remains only partially resolved within the measured bandwidth and acts primarily as a long-time-scale background component.

By contrast, the MF mode exhibits the clearest systematic evolution. Its ridge in the heat map progressively shifts toward lower frequencies, and the representative profiles for cycles 1, 141, and 261 become increasingly left-shifted and broader. This identifies the intermediate process as the main carrier of kinetic slowdown during ageing, consistent with a progressive increase in the effective relaxation time of interfacial and mesoscopic transport processes.

The HF mode is initially concentrated in the kilohertz range, but it broadens and moves toward lower frequencies at advanced cycling stages. As a result, the spectral separation between the HF and MF contributions becomes less pronounced, which explains the smoother and more distributed phase response observed in aged spectra. Overall,~\figurename~\ref{fig:fig4} supports an ageing scenario governed by modal redistribution across time scales: the slowest process remains marginally resolved, the intermediate mode undergoes the most systematic slowing down, and the fastest mode loses spectral compactness and partially merges with the mid-frequency dynamics.

\section{Discussion}

The analysis developed in this work establishes that the Tricomi-based construction is not merely an empirical fitting ansatz, but a bounded passive framework for anomalous relaxation with explicit spectral and realisation structure. Its main value lies in combining spectral flexibility with a passive and constructive mathematical structure. In contrast to purely phenomenological descriptions, the proposed bounded block is anchored by prescribed low-frequency and high-frequency limits, and admits a Stieltjes representation in the admissible parameter range. Therefore, it can be interpreted simultaneously as a dispersive response law, a passive spectral mixture, and a realizable finite-dimensional approximation. This combination is particularly relevant in applications where fitting accuracy alone is not sufficient. In these settings, passivity, interpretability and compatibility with circuit or state-space tools are equally important.

A useful way to position the present framework is relative to classical
fractional-order models. Fractional elements remain extremely effective at
reproducing broad power-law behaviour with very few parameters, and nothing
in the present results argues against their practical usefulness. However,
the Tricomi-based construction offers a different compromise. Since the
relation \(b=a+1\) recovers the Cole--Cole family exactly, the present
framework can be interpreted as a structurally grounded asymmetric extension
of Cole--Cole rather than merely as an unrelated non-fractional alternative.

{This positioning should not be understood as implying that established
generalized relaxation laws cannot describe asymmetric relaxation. For
example, the Havriliak--Negami model already provides a compact asymmetric
two-plateau response, with low- and high-frequency algebraic regimes governed
by different combinations of its fractional parameters. The distinction of
the present construction is instead one of parametrisation and realisation:
the low-frequency edge is controlled directly by \(b-1\), the
high-frequency edge directly by} \(a\) {and} \(\tau\) {sets the
frequency window in which the transition between the two plateaux occurs.}
By separating the exponents that govern the low-frequency and high-frequency
asymptotics, it allows asymmetric two-plateau transitions within a
non-fractional setting, while preserving a direct spectral interpretation and a constructive passive realisation.
{In contrast, Havriliak--Negami models are typically used as 
phenomenological transfer functions and do not provide a direct 
constructive route to passive finite-dimensional realizations. 
The present framework, through its Stieltjes structure, yields 
such realizations explicitly via Gauss--Stieltjes discretisation, 
with guaranteed positive poles and residues.}

{The separation of the exponents also gives the parameters an effective physical reading:}
\(b-1\) {describes the low-frequency, slow-process side of the response,}
\(a\) {describes the high-frequency, fast-process side, and}
\(\tau\) {sets the characteristic frequency window of the dispersive
transition. In dielectric and impedance spectroscopy, low-frequency features
are commonly associated with long-time transport, charge accumulation,
electrode or interfacial polarization, trapping, diffusion-limited dynamics,
or large-scale heterogeneity, whereas higher-frequency features are commonly
associated with local polarization, short-time relaxation, hopping-like
dynamics, or other small-scale mechanisms}~\cite{Kremer2003,Barsoukov2018,Lasia2014}.

{At the same time, especially in multi-block mixtures, this interpretation
should be understood at the level of effective spectral descriptors rather
than as a globally unique microscopic inversion of each individual parameter.
This is why the analysis emphasizes the fitted passive response and
bootstrap-stable in-band mode-density profiles, rather than raw parameter
values alone. Accordingly, these parameters should not be assigned to unique
microscopic mechanisms from impedance fitting alone; a case-specific physical
interpretation requires additional experimental evidence or dedicated future
studies, especially in heterogeneous systems where several relaxation
processes may overlap within the same frequency band.}

In this sense, the proposed model should not be viewed merely as a
replacement for fractional laws, but rather as an alternative modelling route
when one wishes to retain broad dispersive behaviour together with a more
explicit representation of relaxation structure. {Its practical value is
therefore not that it replaces Havriliak--Negami, Mittag--Leffler or
Prabhakar-type descriptions, but that it combines direct control of the two
asymptotic edges, bounded two-plateau anchoring, spectral positivity and a
Foster-type passive finite-dimensional realisation within the same
parametrisation.}

The rational approximation results reinforce this point. The Gauss--Stieltjes discretisation does not only approximate the continuous block accurately over frequency, but does so in a way that preserves positivity of poles and residues. This is important because it makes the passage from continuous special-function model to reduced passive implementation essentially seamless. The convergence patterns observed in moderate-memory and long-tail regimes also clarify an important practical point: the number of effective modes required for accurate approximation is not fixed a priori, but depends on the spectral breadth of the underlying kernel. In broader spectra, more modes are needed to resolve both the in-band dynamics and the out-of-band tails consistently.

{From the computational point of view, the framework has two distinct
levels of use. Direct evaluation of the bounded Tricomi block relies on the
confluent hypergeometric function} \(U(a,b,z)\){, and therefore has a
higher per-evaluation computational cost than classical Cole--Cole or
Havriliak--Negami laws, whose direct evaluation is essentially based on
complex powers and rational normalisations. The practical role of the
Gauss--Stieltjes construction is instead to provide a reduced passive
representation of the same response. After the offline construction of the
positive poles and residues, repeated frequency-domain evaluation or
time-domain implementation only requires a finite sum of first-order stable
terms. In this form, the retained order} \(M\) {provides a direct
accuracy--complexity trade-off, while preserving positivity, stability and
passivity by construction.}

The dielectric validations indicate that this additional flexibility is not merely formal. Across representative tissues, the multi-block Tricomi mixtures systematically improved the complex-domain fit with respect to the classical Cole--Cole baseline, especially in cases with pronounced multiscale dispersion. At the same time, the information-criterion and bootstrap analyses show that increased flexibility is not automatically beneficial. In low-loss settings such as breast fat, higher-order mixtures may introduce poorly constrained extra modes, whereas in more dispersive tissues the additional blocks are supported by both error reduction and modal stability. This suggests that the framework is capable of distinguishing between genuinely structured dispersion and over-parameterisation, which is an important practical advantage in broadband identification problems.

The battery analysis extends this interpretation to electrochemical impedance spectroscopy, where the same framework was applied directly in the complex impedance domain. Here the relevance of the modal representation is not that it provides absolute branch amplitudes, since the reported profiles are normalised in-band, but that it reveals how the spectral geometry of the fitted contributions evolves with cycling. The observed displacement of the dominant mid-frequency content toward lower frequencies, together with the broadening and partial overlap of the high-frequency contribution, is consistent with an ageing process in which dissipative dynamics progressively shift toward slower effective time scales. This type of description is difficult to obtain from a purely global fit and illustrates the interpretative value of the Tricomi decomposition beyond curve reproduction alone.

At the same time, the modal decomposition should be interpreted with appropriate caution. The present results do not claim that the recovered modes correspond one-to-one to uniquely identifiable microscopic physical processes, nor that they provide the true underlying relaxation spectrum of the material. Rather, they should be viewed as effective in-band dispersive descriptors induced by a passive model class. Their value lies in providing a compact and reproducible representation of spectral geometry, including modal location, width, and redistribution across conditions. In this form, they are useful both for comparative ageing analysis and for reduced-order modelling, where physically admissible and interpretable finite-dimensional surrogates are often more relevant than microscopic uniqueness.

Several limitations should nevertheless be kept in mind. First, although the passive parameter box and branch ordering substantially improve robustness, parameter identification in multi-block mixtures remains a non-convex problem, and distinct parameter sets may still generate very similar responses over finite bandwidths. The modal post-processing alleviates this issue at the level of spectral interpretation, but it does not remove identifiability limitations altogether. Second, the present validations were carried out with relatively simple additive branch architectures. More realistic electrochemical or biological systems may require stronger coupling between processes, additional constraints across datasets, or joint fitting across repeated conditions. Third, the modal profiles used in the battery study are intentionally normalised in-band; they are therefore suited to tracking spectral location and width, but not to quantifying absolute dissipative weight or to claiming unique microscopic mode attribution.

These limitations also point naturally toward future developments. One direction is to extend the present framework to richer multi-branch impedance architectures in which Tricomi blocks are combined with additional passive elements or constrained across state variables such as cycle number, temperature, or state of charge. Another is to investigate reduced-order implementations tailored to time-domain simulation, embedded solvers, or passive surrogate modelling. More broadly, the present results support the idea that special-function-based passive kernels can provide a useful middle ground between highly compact fractional phenomenology and fully unconstrained multi-pole fitting, retaining enough analytical structure to remain interpretable while remaining flexible enough to capture broad non-Debye dynamics in real data.

\section{Conclusions}

In this work we introduced a structured non-fractional framework for anomalous relaxation based on the Tricomi confluent hypergeometric function. Starting from the kernel representation of \(U(a,b,s\tau)\), we constructed a bounded two-plateau response through a M\"obius normalisation that enforces prescribed low-frequency and high-frequency limits while preserving a broad dispersive transition. Within the admissible parameter range, we established a Stieltjes representation with nonnegative spectral density, which in turn guarantees complete monotonicity, passivity, causality and compatibility with standard circuit and state-space descriptions. This provides a constructive alternative to fractional-order models when explicit spectral structure and passive realisability are required.

A second central result of the paper is that this passive spectral structure can be turned into a practical finite-dimensional representation. By exploiting the Stieltjes character of the bounded Tricomi block, we derived a Gauss--Stieltjes discretisation leading to Foster-type rational approximations and first-order state-space realisations with positive poles and residues. The numerical experiments showed that these approximations converge systematically with model order and remain robust even in long-tail regimes, thereby providing a direct bridge between the continuous spectral formulation and reduced passive implementations suitable for simulation and identification.

The empirical validations support the usefulness of the framework beyond its formal properties. In broadband dielectric data, multi-block Tricomi mixtures achieved systematically lower complex-domain error than the classical Cole--Cole baseline, while preserving interpretable modal structure and revealing when additional model order is genuinely supported by the data. In electrochemical impedance spectroscopy, the same framework reproduced the ageing-dependent evolution of battery spectra and resolved a clear redistribution of modal content toward slower characteristic time scales. These results indicate that the Tricomi-based construction is not only mathematically well posed, but also effective as an interpretable modelling tool for complex passive media.

Taken together, the proposed approach combines four features that are rarely available simultaneously within a single model class: bounded two-plateau behaviour, certified passivity, explicit spectral structure, and constructive rational realisation. This combination, in our view, is the main contribution of the present work. More broadly, the results suggest that special-function-based passive kernels can offer a viable alternative to fractional elements when one seeks both flexibility in spectral shaping and compatibility with classical circuit and system-theoretic tools.

Future work may extend this framework toward richer multi-branch electrochemical architectures, broader validation across impedance and dielectric datasets, and passive reduced-order implementations tailored to time-domain simulation and circuit solvers.
{In addition, it might be interesting to investigate qualitative and semi-quantitative 
links between the model parameters and underlying microscopic mechanisms, such as ion transport and interfacial polarisation.}
In these settings, the Tricomi construction provides a compact and physically grounded basis for modelling long-memory dynamics with broad and structured relaxation spectra.

\appendix

\section{Proof of Proposition~\ref{prop:rho_nonneg}}
\label{app:rho_proof}

We prove that, under the parameter conditions $a\in(0,1)$ and $b\in(1,2)$, the spectral density $\rho_F(x)$ defined in~\eqref{eq:rho_def} is nonnegative for all $x>0$.

We begin by evaluating the boundary value of $U(a,b,z)$ on the branch cut $z=-x+\jmath0^+$ for $x>0$. Using the connection formula~\eqref{eq:kummer-relation}, we write
\begin{align*}
U(a,b,-x+\jmath0^+)
= c_1\,M(a,b,-x)+c_2\,(-x+\jmath0^+)^{\,1-b}\,
  M(a-b+1,2-b,-x),
\end{align*}
where, for compactness of notation, we defined
\[
c_1=\frac{\Gamma(1-b)}{\Gamma(a-b+1)}, \qquad
c_2=\frac{\Gamma(b-1)}{\Gamma(a)}.
\]

On the principal branch, for $x>0$, the complex power expands as
\begin{equation}
(-x+\jmath0^+)^{\,1-b}
= x^{\,1-b} e^{\,\jmath\pi(1-b)} = x^{\,1-b} e^{\,\jmath\theta} \,,
\label{eqA:branch}
\end{equation}
where $\theta$ is the argument (phase) of the complex power on the principal branch. Substituting~\eqref{eqA:branch} into the previous expression yields the decomposition
\begin{equation}
U(a,b,-x+\jmath0^+) = A(x) + \jmath B(x).
\label{eqA:UAB}
\end{equation}
A short algebraic rearrangement gives
\begin{equation}
\begin{aligned}
A(x) = c_1\,M(a,b,-x) + c_2\,x^{1-b}\cos(\theta)
       M(a-b+1,2-b,-x),
       \label{eqA:A}
\end{aligned}
\end{equation}
\begin{equation}
\begin{aligned}
B(x) = c_2\,x^{1-b}\sin(\theta) M(a-b+1,2-b,-x).
       \label{eqA:B}
\end{aligned}
\end{equation}

We now determine the sign of $B(x)$. For $a\in(0,1)$ we have $\Gamma(a)>0$, and for $b\in(1,2)$ we also have $\Gamma(b-1)>0$, hence $c_2>0$. Moreover,
\[
\theta=\pi(1-b)\in(-\pi,0)
\quad\Longrightarrow\quad
\sin(\theta)<0.
\]
It remains to show that
\[
M(a-b+1,2-b,-x)>0,
\qquad x>0.
\]

Using Kummer's transformation~\cite{Slater1960confluent},
\begin{equation}
M(a-b+1,2-b,-x) = e^{-x}\,M(1-a,2-b,x),
\label{eqA:kummer}
\end{equation}
and noting that $1-a>0$ and $2-b>0$, the series expansion
\begin{equation}
M(1-a,2-b,x)
= \sum_{k=0}^{\infty}
  \frac{(1-a)_k}{(2-b)_k}\frac{x^k}{k!}
\label{eqA:series}
\end{equation}
has strictly positive coefficients for every $x>0$, so that
\[
M(a-b+1,2-b,-x)>0,\qquad x>0.
\]

At this point, having collected all sign contributions, let us consider $U=A+\jmath B$ from~\eqref{eqA:UAB} with $A,B$ given by~\eqref{eqA:A}--\eqref{eqA:B} and let us define $F_{a,b}=U/(1+U)$.
A direct computation yields
\begin{equation}
\Im\!\left\{\frac{U}{1+U}\right\}
= \frac{B}{(1+A)^2 + B^2}.
\label{eqA:ImF}
\end{equation}
Since the denominator of this latter is strictly positive and $B(x)<0$ for $x>0$, we obtain
\[
\Im\{F_{a,b}(-x+\jmath0^+)\} < 0,
\qquad x>0.
\]
By the definition~\eqref{eq:rho_def}, this implies $\rho_F(x) \ge 0$ for $x>0$, proving the nonnegativity of the spectral density.

By standard characterisations of Stieltjes functions and the Stieltjes--Perron inversion formula, the analyticity of $F_{a,b}$ on $\mathbb{C}\setminus(-\infty,0]$ together with the nonnegative boundary density $\rho_F$ yields the representation~\eqref{eq:F_stieltjes}. For $\Re\{z\}>0$, the integrand $\Re\{1/(x+z)\}>0$ ensures $\Re\{F_{a,b}(z)\}\ge0$, establishing that $F_{a,b}$ is positive-real.

\backmatter
\bmhead{Acknowledgments}
The work of I.~C. has been carried out in the framework of the activities of the Italian National Group of Mathematical Physics (GNFM), INdAM.

\bibliography{bib-fract}

\section*{Statements and Declarations}

\subsection*{Competing Interests}
The authors have no relevant financial or non-financial interests to disclose.

\subsection*{Availability of Data and Materials}
The datasets analysed during the current study are publicly available from the sources cited in the manuscript. The battery electrochemical impedance spectroscopy dataset associated with Zhang et al. is available in Zenodo (DOI: 10.5281/zenodo.3633835). The dielectric tissue data and reference Cole--Cole fits used in this work are available from the corresponding published sources cited in the manuscript. No new experimental data were generated in this study. Processed data supporting the findings of this study are available from the corresponding author upon reasonable request.

\end{document}